\definecolor{darkblue}{rgb}{0,0,0.45}
\definecolor{darkred}{rgb}{0.6,0,0}
\definecolor{darkgreen}{rgb}{0.13,0.5,0}
\newtheorem{claimm}{Claim}[section]
\numberwithin{claimm}{section}
\crefname{claimm}{Claim}{Claims}
\newcommand{\Oh}{\mathcal{O}}
\newcommand{\APX}{{\sf APX}\xspace}
\newcommand{\ZPP}{{\sf ZPP}\xspace}
\newcommand{\NP}{{\sf NP}\xspace}
\newcommand{\PP}{{\sf P}\xspace}
\newcommand{\Wone}{{\sf W[1]}\xspace}
\newcommand{\Trule}{\rule{0pt}{3ex}}
\newcommand{\Brule}{\rule[-1.5ex]{0pt}{0pt}}
\newcommand{\MCSI}[1]{%
  \ifstrempty{#1}{%
    \textsc{MCSI}\xspace
  }{%
    \textsc{MCSI(#1)}\xspace
  }%
}
\newcommand{\IS}{\textsc{Independent Set}\xspace}
\newcommand{\MIS}{\textsc{Multicolored Independent Set}\xspace}
\newcommand{\MC}{\textsc{Multicolored Clique}\xspace}
\DeclareMathOperator{\val}{val}
\newcommand{\eps}{\varepsilon}
\renewcommand{\epsilon}{\varepsilon}
\newcommand{\OptProb}[3]{
\begin{center}
\begin{tabularx}{\textwidth}{|l X|}
	\hline
	\multicolumn{2}{|c|}{{\sc #1}\Trule} \\
	\textbf{Input:\ }&{#2}\\
	\textbf{Goal:\ }&{#3\Brule}\\
	\hline
\end{tabularx}
\end{center}
}
\newcommand{\cC}{\mathcal{C}}
\renewcommand{\cref}{\Cref}
\newcommand{\repedge}{r'}
\newcommand{\repvert}[1]{r_{#1}}
\newcommand{\cH}{\mathcal{H}}
\title{Parameterized Inapproximability \newline of Independent Set in $H$-Free 
Graphs%
\thanks{An extended abstract of this paper was presented at WG 
2020~\cite{DBLP:conf/wg/DvorakFRR20}.}
}
\author{Pavel Dvo\v{r}\'{a}k\inst{1}\thanks{Supported by Czech Science Foundation 
GA{\v C}R (grant \#19-27871X).}
\and
Andreas Emil Feldmann\inst{1}$^{\star\star}$
\and
Ashutosh Rai\inst{1}\thanks{Supported by Center for Foundations of Modern Computer 
Science (Charles Univ.\ project UNCE/SCI/004).}
\and \\
Pawe{\l} Rz{\k{a}}\.{z}ewski\inst{2,3}\thanks{Supported by Polish National Science Centre grant no. 2018/31/D/ST6/00062.}
}
\institute{
Faculty of Mathematics and Physics, Charles University, Prague, Czechia\newline
\email{koblich@iuuk.mff.cuni.cz, feldmann.a.e@gmail.com, ashu.rai87@gmail.com}
\and
Faculty of Mathematics and Information Science,\\Warsaw University of Technology, 
Warsaw, Poland\\
\email{pawel.rzazewski@pw.edu.pl}
\and
University of Warsaw, Institute of Informatics, Warsaw, Poland
}
\begin{document}

\maketitle

\begin{abstract}
We study the \IS problem in $H$-free graphs, i.e., graphs excluding some fixed 
graph $H$ as an induced subgraph. We prove several inapproximability results 
both for polynomial-time and parameterized algorithms. 

Halld\'{o}rsson~[SODA~1995] showed that for every $\delta>0$ the \IS problem has a polynomial-time $(\frac{d-1}{2}+\delta)$-approximation 
algorithm in $K_{1,d}$-free graphs. We extend this result by showing that 
$K_{a,b}$-free graphs admit a polynomial-time 
$\Oh(\alpha(G)^{1-1/a})$-approximation, where $\alpha(G)$ is the size of a 
maximum independent set in $G$.
Furthermore, we complement the result of Halld\'{o}rsson by showing that 
for some $\gamma=\Theta(d/\log d),$ there is no polynomial-time $\gamma$-approximation 
algorithm for these graphs, unless \NP~=~{\sf ZPP}.

Bonnet {\em et al.}~[Algorithmica~2020] showed that \IS parameterized by the size $k$ of the independent set is \Wone-hard on graphs which do not contain 
(1)~a cycle of constant length at least~$4$, (2)~the 
star $K_{1,4}$, and (3)~any tree with two vertices of degree at least $3$ at constant distance.
We strengthen this result by proving three inapproximability results under 
different complexity assumptions for almost the same class of graphs (we weaken 
conditions (1) and (2) that $G$ does not contain a cycle of constant length at 
least 5 or $K_{1,5}$). 
First, under the ETH, there is no $f(k) \cdot n^{o(k/\log 
k)}$ algorithm for any computable function~$f$. Then, under the deterministic 
Gap-ETH, there is a constant $\delta>0$ such that no $\delta$-approximation can 
be computed in $f(k) \cdot n^{O(1)}$ time. Also, under the stronger randomized Gap-ETH 
there is no such approximation algorithm with runtime $f(k) \cdot 
n^{o(\sqrt{k})}$.

Finally, we consider the parameterization by the excluded graph $H$, and show 
that under the ETH, \IS has no $n^{o(\alpha(H))}$ algorithm in $H$-free 
graphs. Also, we prove that there is no \mbox{$d/k^{o(1)}$-approximation}  
algorithm for $K_{1,d}$-free graphs
with runtime $f(d,k) \cdot n^{\Oh(1)}$, under the deterministic \mbox{Gap-ETH.}
\end{abstract}

\section{Introduction}\label{sec:Intro}
The \IS problem, which asks for a maximum sized set of pairwise non-adjacent 
vertices in a graph, is one of the most well-studied problems in algorithmic 
graph theory. It was among the first 21 problems that were proven to be \NP-hard 
by Karp~\cite{Karp1972}, and is also known to be hopelessly difficult to 
approximate in polynomial time: H{\aa}stad~\cite{Hastad96cliqueis} proved that 
under standard  assumptions from classical complexity theory the problem admits 
no $(n^{1-\epsilon})$-approximation, for any~$\epsilon >0$ 
(by $n$ we always denote the number of vertices in the input graph). This was 
later strengthened by Khot and Ponnuswami~\cite{Khot2006}, who were able to 
exclude any algorithm with approximation ratio $n/(\log n)^{3/4 + \epsilon}$, 
for any $\epsilon > 0$. Let us point out that the currently best 
polynomial-time approximation algorithm for \IS achieves the approximation ratio 
$\Oh(n \frac{(\log \log n)^2} {(\log 
n)^3})$~\cite{DBLP:journals/siamdm/Feige04}.

There are many possible ways of approaching such a difficult problem, in order to obtain some positive results.
One could give up on generality, and ask for the complexity of the problem on 
restricted instances. For example, while the \IS problem remains \NP-hard in 
subcubic graphs~\cite{GAREY1976237}, a straightforward greedy algorithm gives a 
3-approximation.

\paragraph{$H$-free graphs.} A large family of restricted instances, for which 
the \IS problem has been well-studied, comes from forbidding certain induced 
subgraphs. For a (possibly infinite) family $\cH$ of graphs, a graph $G$ is 
\emph{$\cH$-free} if it does not contain any graph of $\cH$ as an induced 
subgraph. If $\cH$ consists of just one graph, say $\cH = \{H\}$, then we say 
that $G$ is $H$-free. 
The investigation of the complexity of \IS in $\cH$-free graphs dates back to 
Alekseev, who observed that the so-called ``Poljak construction''~\cite{Po74} yields the following.

\begin{theorem}[Alekseev~\cite{alekseev1982effect}, Poljak~\cite{Po74}]\label{thm:alekseev}
Let $s \geq 3$ be a constant. The \IS problem is \NP-hard in graphs that do
not contain any of the following induced subgraphs:
\begin{compactenum}
 \item a cycle on at most $s$ vertices,
 \item the star $K_{1,4}$, and
 \item any tree with two vertices of degree at least 3 at distance at most $s$.
\end{compactenum}
\end{theorem}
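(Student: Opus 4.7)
The plan is to reduce from \IS on cubic graphs, which is \NP-hard by Garey, Johnson, and Stockmeyer. Given a cubic graph $G$ on $n$ vertices and $m = 3n/2$ edges, construct $G'$ by subdividing each edge of $G$ into a path of length $2s+1$, i.e., inserting $2s$ new internal vertices per edge. Since $|V(G')| = n + 2sm$ and $s$ is a constant, this is a polynomial-time reduction.

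Next I would verify that $G'$ avoids all three forbidden structures. For (2), every vertex of $G'$ has degree at most $3$ (original vertices keep their cubic degree, subdivision vertices have degree $2$), so $G'$ is $K_{1,4}$-free. For (1), every internal subdivision vertex has degree $2$, hence any cycle of $G'$ must traverse each used subdivided path in full; cycles of $G'$ therefore correspond to cycles of $G$ with each edge blown up to a path of length $2s+1$, and the girth of $G'$ is at least $3(2s+1)>s$. For (3), only original vertices of $G$ can have degree $\geq 3$ in $G'$, and any two such vertices are at $G'$-distance at least $2s+1 > s$; since passing to an induced subgraph cannot shrink distances, no induced tree of $G'$ contains two vertices of degree $\geq 3$ at tree-distance $\leq s$.

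The core of the reduction is the identity $\alpha(G') = \alpha(G) + sm$. For the $\geq$ direction, given an IS $I$ of $G$, for each edge $uv$ add the $s$ subdivision vertices of one parity class along the $uv$-path, chosen so as to avoid whichever endpoint (at most one, since $I$ is independent in $G$) lies in $I$; the resulting set has size $|I| + sm$ and is independent in $G'$. For the $\leq$ direction, take an IS $I'$ of $G'$, let $I \df I' \cap V(G)$, and call an edge $uv$ of $G$ \emph{bad} if $u,v \in I$. A short case analysis along each path $u,w_1,\dots,w_{2s},v$ shows that $I'$ contains at most $s$ internal vertices from each subdivided path, and at most $s-1$ from a bad one, so $|I'| \leq |I| + sm - b$, where $b$ is the number of bad edges. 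Since bad edges live inside $G[I]$ and number $b$, the subgraph $G[I]$ has a vertex cover of size at most $b$; removing it from $I$ yields an actual IS of $G$ of size at least $|I|-b$, hence $|I|-b \leq \alpha(G)$ and $|I'| \leq \alpha(G)+sm$.

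The main obstacle is this final $\leq$ inequality, since $I' \cap V(G)$ need not be an IS of $G$. The vertex-cover-of-$G[I]$ trick is exactly what is needed to convert the ``surplus'' from bad edges on the original side into savings on the subdivided-path side, making the accounting tight; once it is in place the reduction is immediate.
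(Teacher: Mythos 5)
The paper gives no proof of this theorem; it is stated as a citation of Alekseev's result, whose underlying argument is exactly the classical Poljak-style edge-subdivision reduction you describe. Your write-up is correct and complete: the subdivision by $2s$ internal vertices per edge yields $\alpha(G')=\alpha(G)+sm$, the three structural exclusions are verified soundly (degree bound, girth bound, and the distance-$\geq 2s+1$ separation of high-degree vertices combined with the fact that induced subgraphs cannot shrink distances), and the vertex-cover accounting for bad edges correctly closes the $\leq$ direction.
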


We can restate \cref{thm:alekseev} as follows: the \IS problem is 
\mbox{\NP-hard} in $H$-free graphs, unless $H$ is a subgraph of a subdivided 
claw (i.e.,~three paths which meet at one of their endpoints). The reduction 
also implies that for each such $H$ the problem is \APX-hard and cannot be 
solved in subexponential time, unless the Exponential Time Hypothesis (ETH) 
fails. On the other hand, polynomial-time algorithms are known only for very few 
cases. First let us consider the case when $H=P_t$, i.e., we forbid a path on 
$t$ vertices. Note that the case of $t=3$ is trivial, as every $P_3$-free graph 
is a disjoint union of cliques. Already in 1981 Corneil, Lerchs, and 
Burlingham~\cite{CORNEIL1981163} showed that \IS is tractable for $P_4$-free 
graphs. For many years there was no improvement, until the breakthrough 
algorithm of Lokshtanov, Vatshelle, and 
Villanger~\cite{DBLP:conf/soda/LokshantovVV14} for $P_5$-free graphs. Their 
approach later recently extended to $P_6$-free graphs by Grzesik, Klimo\v{s}ova, 
Pilipczuk, and Pilipczuk~\cite{DBLP:conf/soda/GrzesikKPP19}. 
The general belief that the problem should be polynomial-time solvable for $P_t$-free graphs,
for any fixed $t$, is suppotred by recent quasipolynomial-time algorithm by Gartland and Lokshtanov~\cite{gartlandpK}; see also a simplified version of Pilipczuk, Pilipczuk, and Rz\k{a}\.zewski~\cite{DBLP:conf/sosa/PilipczukPR21}.

Even less is known for the case if $H$ is a subdivided claw. The problem can be 
solved in polynomial time in claw-free (i.e., $K_{1,3}$-free) graphs, see 
Sbihi~\cite{SBIHI198053} and Minty~\cite{MINTY1980284}. This was later extended to $H$-free 
graphs, where $H$ is a claw with one edge once subdivided (see 
Alekseev~\cite{ALEKSEEV20043} for the unweighted version and Lozin, 
Milani\v{c}~\cite{DBLP:journals/jda/LozinM08} for the weighted one). 
We also know that for any subdivided claw~$H$, the problem can be solved in 
subexponential time in $H$-free 
graphs~\cite{DBLP:conf/soda/ChudnovskyPPT20,DBLP:conf/icalp/MajewskiM0OPRS22}.

When it comes to approximations, 
Halld{\'{o}}rsson~\cite{DBLP:conf/soda/Halldorsson95} gave an elegant local 
search algorithm that finds a $(\frac{d-1}{2}+\delta)$-approximation of a 
maximum independent set in $K_{1,d}$-free graphs for any constant $\delta>0$ 
in polynomial time. Chudnovsky, Thomass{\'e}, Pilipczuk, and 
Pilipczuk~\cite{DBLP:conf/soda/ChudnovskyPPT20} designed a QPTAS 
(quasi-polynomial-time approximation scheme) that works for subdivided claw $H$;
see also the improved version of Majewski et al.~\cite{DBLP:conf/icalp/MajewskiM0OPRS22}.
Recall that if $H$ is not (a subgraph of) a subdivided claw, then the problem is \APX-hard.
The existence of algorithms for MIS in $H$-free graph with approximation guararantee $n^{1-\delta}$ for constant $\delta$ was 
studied recently by Bonnet et al.~\cite{DBLP:conf/esa/BonnetTTW20} in the connection to the
famous Erd\H{o}s-Hajnal conjecture.

\paragraph{Parameterized complexity.} Another approach that one could take is 
to look at the problem from the parameterized perspective: we no longer insist 
on finding a maximum independent set, but want to verify whether some 
independent set of size at least $k$ exists. To be more precise, we are 
interested in knowing how the complexity of the problem depends on $k$. The best type 
of behavior we are hoping for is \emph{fixed-parameter tractability} (FPT), 
i.e., the existence of an algorithm with running time $f(k) \cdot n^{\Oh(1)}$, 
for some function $f$ (note that since the problem is \NP-hard, we expect $f$ to 
be super-polynomial).

It is known~\cite{DBLP:books/sp/CyganFKLMPPS15} that on general graphs the \IS 
problem is \Wone-hard parameterized by $k$, which is a strong indication that 
it does not admit an FPT algorithm. Furthermore, it is even unlikely to admit 
any non-trivial \emph{fixed-parameter approximation (FPA)}: a $\beta$-FPA 
algorithm (for $\beta > 1$) for the \IS problem is an algorithm that takes as input a graph $G$ 
and an integer~$k$, and in time $f(k) \cdot n^{\Oh(1)}$ either correctly 
concludes that $G$ has no independent set of size at least~$k$, or outputs an 
independent set of size at least $k/\beta$ (note that~$\beta$ does not have 
to be a constant). It was shown in~\cite{DBLP:conf/focs/ChalermsookCKLM17} that 
on general graphs no $o(k)$-FPA exists for \IS, unless the 
deterministic\footnote{While this is stated under the randomized Gap-ETH 
in~\cite{DBLP:conf/focs/ChalermsookCKLM17}, a derandomization exists; 
see~\cite[Section~4.2.1]{DBLP:conf/focs/ChalermsookCKLM17}.} Gap-ETH fails.

\paragraph{Parameterized complexity in $H$-free graphs.}
As we pointed out, none of the discussed approaches, i.e., considering $H$-free 
graphs or considering parameterized algorithms, seems to make the \IS problem 
more tractable. However, some positive results can be obtained 
by combining these two settings, i.e., considering the parameterized complexity 
of \IS in $H$-free graphs.
For example, the Ramsey theorem implies that any graph with $\Omega(4^{p})$ 
vertices contains a clique or an independent set of size $\Omega(p)$.
Since the proof actually tells us how to construct a clique or an independent set in polynomial time~\cite{Erdos1987}, we immediately obtain a very simple FPT algorithm for $K_p$-free graphs.
Dabrowski~\cite{DBLP:conf/iwoca/DabrowskiLMR10} provided some positive and negative results for the complexity of the \IS problem in $H$-free graphs, for various $H$.
The systematic study of the problem was initiated by Bonnet, Bousquet, Charbit, Thomass\'{e}, and Watrigant~\cite{DBLP:conf/iwpec/BonnetBCTW18} and continued by Bonnet, Bousquet, Thomass\'{e}, and Watrigant~\cite{DBLP:conf/isaac/BonnetBTW19}.
Among other results,  Bonnet {\em et al.}~\cite{DBLP:conf/iwpec/BonnetBCTW18} obtained the following 
analog of \cref{thm:alekseev}.

\begin{theorem}[Bonnet {\em et al.}~\cite{DBLP:conf/iwpec/BonnetBCTW18}]\label{thm:bonnet-w1hard}
Let $s \geq 4$ be a constant. The \IS problem is  \Wone-hard in graphs that do
not contain any of the following induced subgraphs:
\begin{compactenum}
 \item a cycle on at least~4 and at most $s$ vertices,
 \item the star~$K_{1,4}$, and
 \item any tree with two vertices of degree at least 3 at distance at most~$s$.
\end{compactenum}
\end{theorem}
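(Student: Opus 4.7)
The plan is to give a parameterized reduction from \MC, known to be $\Wone$-hard, to \IS on the restricted class of graphs. Fix an input with color classes $V_1,\dots,V_k$, the constant $s \geq 3$, and a subdivision length $\ell > s$. I would construct a graph $G'$ from two families of gadgets: for each color class $V_i$ a \emph{selection gadget} $S_i$, realized as a long subdivided path with one designated ``slot'' per vertex $v \in V_i$; and for each edge $uv$ of the input with $u \in V_i, v \in V_j$, an \emph{edge gadget} $E_{uv}$, realized as a subdivided path joining the slot of $u$ in $S_i$ to the slot of $v$ in $S_j$. The gadgets are designed so that a maximum independent set of $G'$ is forced to ``select'' exactly one slot per class, and so that each edge gadget $E_{uv}$ attains its local maximum IS contribution precisely when the two selected slots correspond to $u$ and $v$. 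Setting a target size $k'$ depending only on $k$, $s$, and the gadget parameters then yields the equivalence: the input has a multicolored $k$-clique iff $G'$ has an independent set of size $k'$.

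The three structural requirements are enforced by subdivision and careful degree bookkeeping. Every edge of the initial construction is replaced by a path with at least $s+1$ internal vertices, so every induced cycle of $G'$ has length exceeding $s$, handling condition~(1). Vertices of degree $\geq 3$ appear only at gadget junctions; I would make each such junction have degree exactly $3$ by cascading a multi-way attachment (at a slot that needs to connect to several edge gadgets) into a chain of degree-$3$ branch vertices. A vertex of degree at most $3$ cannot be the center of an induced $K_{1,4}$, and every vertex of $G'$ then has degree at most $3$, so condition~(2) is automatic. Condition~(3) is obtained by subdividing the cascades and paths enough that every two distinct degree-$\geq 3$ vertices lie at graph-distance strictly more than $s$, which rules out every induced tree with two branch vertices at distance at most $s$.

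The main obstacle is keeping the target size $k'$ a function of $k$ alone (so the reduction is genuinely parameterized), despite the cascade attached to a slot having length governed by the degrees in the input and hence by $|V(G)|$. The gadget must therefore be engineered so that the IS contribution of every ``non-selected'' portion of each $S_i$ and of every edge gadget is structurally forced to a fixed value in any maximum independent set, with only the single ``selected slot per class'' bit carrying the nontrivial information; this is the usual pendant-path parity trick and it is the most delicate part of the construction. Once this bookkeeping is in place, the equivalence proof splits cleanly: forward, a multicolored $k$-clique yields a canonical IS of size exactly $k'$ by selecting the slots of the clique vertices; backward, an IS of size $k'$ must realize the local maximum on every gadget, which forces exactly one slot per class and consistency across every edge gadget, hence a multicolored $k$-clique.
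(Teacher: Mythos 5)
This theorem is quoted from Bonnet et al.\ and is not proved in the paper, but the paper's own Section~4 constructions (which strengthen it) show what a working reduction looks like, and your proposal diverges from that in a way that is fatal rather than merely different.

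The core problem is that subdivision is the wrong tool for a \emph{parameterized} reduction to \IS. Your construction replaces every connection by a path with more than $s$ internal vertices and cascades every high-degree attachment point into a chain of degree-$3$ vertices, so the resulting graph $G'$ is triangle-free and has maximum degree $3$. Such a graph on $N$ vertices always has an independent set of size at least $N/4$ (greedily), and more generally \IS is FPT on triangle-free graphs by Ramsey, as the paper itself notes right after \cref{thm:bonnet-w1hard}. So if your target $k'$ really were a function of $k$ alone, the produced instances would be trivially solvable whenever $N$ exceeds roughly $4k'$, and the reduction could not establish \Wone-hardness. The only escape is to let $k'$ absorb the independent-set contribution of all the long subdivided paths and cascades, but their total length is governed by $|V(G)|$ and the degrees of the input graph, so $k'$ then depends on $n$ and the reduction is no longer a parameterized one. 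You flag this as ``the most delicate part,'' but no ``pendant-path parity trick'' can fix it: the dependence of $k'$ on $n$ is forced by the degree bound, not by bookkeeping. This is precisely the distinction between Alekseev's \NP-hardness result (\cref{thm:alekseev}), which \emph{is} proved by subdivision and therefore also excludes triangles, and the Bonnet et al.\ result, which cannot exclude $C_3$ for exactly this reason.

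The actual route --- mirrored in the paper's proof of \cref{thm:noFPAS-classes} --- is to make the basic building blocks large \emph{cliques} (one per selection/edge gadget), so that each block contributes at most one vertex to any independent set and the target is simply the number of blocks, i.e.\ $O(k^2)$. The forbidden structures are then avoided not by lowering degrees but by how consecutive cliques are wired: arranging the cliques in long cycles kills short induced cycles through many blocks, connecting consecutive cliques by a half-graph pattern with respect to a total order $\prec_i$ kills induced $C_4$'s (no induced matching of size $2$ between two cliques, cf.\ \cref{lem:NoInducedMatching}), and transitivity of $\prec_i$ kills large induced stars (cf.\ \cref{lem:Star5}). If you want to prove this theorem yourself, that clique-plus-ordering machinery is the idea you are missing.
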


Note that, unlike in \cref{thm:alekseev}, we are not able to show hardness for 
$C_3$-free graphs: as already mentioned, the Ramsey theorem implies that 
\IS is FPT in $C_3$-free graphs. Thus, graphs $H$ for which there is hope for 
FPT algorithms in $H$-free graphs are essentially obtained from paths and 
subdivided claws (or their subgraphs) by replacing each vertex with a clique.

Let us point out that, even though it is not stated there explicitly, the reduction of 
Bonnet {\em et al.}~\cite{DBLP:conf/iwpec/BonnetBCTW18} also excludes any algorithm solving the problem in time $f(k) \cdot n^{o(\sqrt{k})}$, unless the ETH fails.

\paragraph{Our results.}
We study the approximation of the \IS problem in 
$H$-free graphs, mostly focusing on approximation hardness.
Our first two results are related to 
Halld\'orsson's~\cite{DBLP:conf/soda/Halldorsson95} polynomial-time 
$(\frac{d-1}{2}+\delta)$-approximation algorithm for \mbox{$K_{1,d}$-free} 
graphs.
First, in \cref{sec:Kab} we extend this result to $K_{a,b}$-free graphs, for any constants $a,b$, showing the following theorem.

\begin{restatable}{theorem}{kabtheorem}
\label{thm:Kab}
Given a $K_{a,b}$-free graph $G$, an 
$\Oh\bigl((a+b)^{1/a} \cdot \alpha(G)^{1-1/a}\bigr)$-approx\-imation can be computed in 
$n^{\Oh(a)}$ time.
\end{restatable}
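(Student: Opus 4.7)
The plan is to find an independent set of size $\Omega(k^{1/a}/(a+b)^{1/a})$ in a given $K_{a,b}$-free graph $G$ with $\alpha(G) = k$. The central structural observation is: for every independent set $A \subseteq V(G)$ of size $a$, the common neighborhood $C(A) := \bigcap_{v \in A} N_G(v)$ has independence number at most $b-1$; otherwise $A$ together with an independent $b$-subset of $C(A)$ would induce a $K_{a,b}$. Fixing a maximum independent set $I^*$, this gives $|C(A) \cap I^*| \leq b-1$ for every $a$-element independent set $A$, because $C(A) \cap I^*$ is an independent set inside $C(A)$.

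For the base case $a = 1$, the setup is a clean version of Halld\'orsson's: $K_{1,b}$-free means every open neighborhood has independence number at most $b-1$, and a direct greedy algorithm --- repeatedly pick any vertex $v$, add it to $I$, and delete $N[v]$ --- yields $|I| \geq k/b$, matching the $O(b)$-approximation claimed for $a=1$ (no local search is needed in this simple form).

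For $a \geq 2$, I would use an iterative algorithm: maintain a partial IS $I$ and a residual graph $G' = G[V \setminus N_G[I]]$; at each step, enumerate the $O(n^a)$ candidate $a$-element independent subsets of $G'$ (which accounts for the $n^{O(a)}$ running time), pick one minimizing a suitable potential, and add it to $I$. The analysis tracks $\alpha(G')$ and $|I|$ via a potential function, aiming to show that $\alpha(G')$ drops by $O((a+b)^{1-1/a} \alpha(G')^{1-1/a})$ per iteration while $|I|$ grows by $a$.

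The main obstacle is deriving this per-iteration bound. The structural fact directly bounds $|C(A) \cap I^*|$, but the algorithm must remove the \emph{union} neighborhood $N_G[A]$ to preserve the IS property, and $|N(A) \cap I^*|$ is not bounded by $b-1$ in general. The crux is to show, via a K{\H{o}}v{\'a}ri--S{\'o}s--Tur{\'a}n-style double counting averaged over the $a$-IS of $G'$ (relating $\sum_A |C(A) \cap I^*|$ to $\sum_u \binom{\deg_{G'}(u)}{a}$ and applying Jensen's inequality to the convex function $\binom{\cdot}{a}$), that there exists an $a$-IS $A$ in $G'$ with $|N(A) \cap I^*| = O((a+b)^{1-1/a}\,\alpha(G')^{1-1/a})$. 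Iterating $\Theta(k^{(a-1)/a}(a+b)^{1/a})$ times then delivers $|I| = \Omega(k^{1/a}/(a+b)^{1/a})$, as required.
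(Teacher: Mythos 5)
Your structural observation---that the common neighborhood of any independent $a$-set has independence number at most $b-1$---is exactly the fact the paper uses, and your $a=1$ base case is fine. But for $a\geq 2$ the proposal stalls at precisely the step you yourself flag as the crux, and the route you sketch does not close it. The statement you need is not the existential one, that some $a$-independent set $A$ has small $|N(A)\cap I^*|$: that is trivially true (any $a$-subset $A$ of $I^*$ has $N(A)\cap I^*=\emptyset$, since $I^*$ is independent) and useless, because the algorithm cannot recognize such an $A$. What you need is a \emph{computable} selection rule whose output provably has small $|N(A)\cap I^*|$, and none is given. Natural proxies fail: minimizing $|N[A]|$ or degree sums gives no control, since a $K_{a,b}$-free graph can have every vertex of $I^*$ adjacent to a huge clique (e.g.\ a complete split graph), so every candidate $A$ has $|N[A]|=\Omega(n)$ even though $|N(A)\cap I^*|$ may be $0$. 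The K\H{o}v\'ari--S\'os--Tur\'an counting you invoke also does not deliver the bound: the hypothesis only restricts \emph{induced} copies of $K_{a,b}$, so it bounds the independence number inside common neighborhoods, not their size, and in particular it does not bound the number of edges leaving $I^*$; it says nothing about the union neighborhood $N(A)$.

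The paper sidesteps this by never trying to pick a good $a$-set relative to the unknown optimum. It runs a local search producing one independent set $I$ of size $k$ that is stable under swapping any $C\subseteq I$ with $|C|\leq a-1$ for an independent set of size $|C|+1$ inside $C\cup V_C$, where $V_C$ is the set of vertices whose neighborhood in $I$ equals $C$. Only then does it bound $|I^*|$, by partitioning $V(G)$ according to the trace of neighborhoods on $I$: vertices with at most $a-1$ neighbors in $I$ contribute at most $\sum_{c=1}^{a-1}c\binom{k}{c}$ by the termination condition, and vertices with at least $a$ neighbors in $I$ fall into sets $V_D$ indexed by $a$-subsets $D$ of $I$, each meeting $I^*$ in fewer than $b$ vertices by your structural fact applied to $D\subseteq I$ (a set the algorithm knows) rather than to subsets of $I^*$ or of a residual graph. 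This two-part count gives $\alpha(G)\leq (a-1)(ek)^{a-1}+(b-1)k^a$ and hence the stated ratio. As written, your argument has a genuine gap at its central lemma, and it also makes no provision for optimal vertices with fewer than $a$ neighbors in the chosen sets, which is exactly what the local search and the $V_C$ analysis handle.
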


Then, in~\cref{sec:k1d} we show that the approximation ratio of the algorithm  of Halld{\'o}rsson~\cite{DBLP:conf/soda/Halldorsson95} is optimal, up to logarithmic factors.

\begin{restatable}{theorem}{starfreepoly}\label{thm:k1dfree}
There is a constant $d^\star$ and a function $\beta=\Theta(d/\log d)$ such 
that for any $d\geq d^\star$ the \IS problem does not admit a polynomial time 
$\beta$-approximation algorithm in $K_{1,d}$-free graphs, unless \ZPP~=~\NP.
\end{restatable}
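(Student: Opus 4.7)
The plan is to derive the desired hardness by transfer from a known inapproximability result for \IS on \emph{bounded-degree} graphs, using the simple but crucial observation that any graph of maximum degree at most $d-1$ is automatically $K_{1,d}$-free: for every vertex $v$ one has $|N(v)| \le d-1$, so $N(v)$ cannot host an independent set of size $d$. Consequently, any inapproximability for \IS on graphs of maximum degree $\Delta$ transfers verbatim to $K_{1,\Delta+1}$-free graphs. It therefore suffices to establish an $\Omega(\Delta/\log \Delta)$-inapproximability for bounded-degree \IS under $\NP \neq \ZPP$, and then set $d \df \Delta+1$.

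The starting point is H{\aa}stad's randomized reduction, which under $\NP \neq \ZPP$ rules out polynomial-time $n^{1-\eps}$-approximations of \IS on general graphs for every $\eps>0$; the randomness here is the sole source of the $\ZPP$ assumption. I would compose it with a standard \emph{degree-reduction} gadget: each vertex $v$ of large degree in the H{\aa}stad instance is replaced by a constant-degree graph (a suitable expander or other explicit construction) on roughly $\deg(v)$ vertices, with incident edges distributed over the gadget's vertices. A proper choice of gadget ensures that the independence number of the resulting bounded-degree graph tracks the original independence number up to a multiplicative $\Oh(\log \Delta)$ loss, which suffices to convert the $n^{1-\eps}$ gap on the general instance into a $\Theta(\Delta/\log \Delta)$ gap on the degree-$\Delta$ instance.

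The main technical obstacle will be the soundness analysis of the degree-reduction step: one must argue that a large independent set in the bounded-degree graph can be pulled back to a large independent set in the original H{\aa}stad instance, with only a logarithmic loss. This is typically verified via combinatorial or spectral properties of the gadget, which bound the number of its vertices that can simultaneously sit in an independent set. Once the $\Omega(\Delta/\log \Delta)$-hardness is in place, setting $\Delta = d-1$ and invoking the $K_{1,d}$-freeness observation gives the desired $\Theta(d/\log d)$ inapproximability ratio, which almost matches the $\Oh(d)$-approximation algorithm of Halld{\'o}rsson mentioned in the introduction.
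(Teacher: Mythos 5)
There is a genuine gap here, and it lies exactly where you flagged the ``main technical obstacle.'' Your plan replaces the target class ($K_{1,d}$-free graphs) by the much smaller class of graphs of maximum degree $d-1$, and then requires an $\Omega(\Delta/\log\Delta)$ inapproximability for \IS on degree-$\Delta$ graphs. No such bound is known, and it is not obtainable by composing H{\aa}stad's $n^{1-\eps}$ hardness with a degree-reduction gadget: the known disperser/expander-based degree reductions (Trevisan) only preserve a gap of $\Delta/2^{\Oh(\sqrt{\log\Delta})}$, and even the later improvements for bounded-degree \IS stop at $\Delta/\mathrm{polylog}(\Delta)$ with more than one logarithm. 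A clean $\Oh(\log\Delta)$ multiplicative loss in the pull-back of independent sets from the gadgeted graph is precisely what nobody knows how to achieve; proving it would essentially match the best known $\Oh(\Delta\log\log\Delta/\log\Delta)$-approximation algorithm for bounded-degree graphs and would be a result of independent interest, not a routine verification. So the soundness step you defer cannot be filled in with standard tools, and the argument as proposed does not go through.

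The paper sidesteps this by \emph{not} producing a bounded-degree graph. It reduces from degree-$t$ \MCSI{} (equivalently, bounded-degree \LC), for which Laekhanukit proves a $\Theta(t/\log t)$ gap under \ZPP~$\neq$~\NP, where $t$ bounds the degree of the \emph{constraint graph} $J$, not of the output. The standard constraint--pair conflict graph $G'$ then has huge cliques $V'_{ij}$ (one per edge of $J$) and hence unbounded degree, but it is $K_{1,2t+2}$-free because an induced star can meet each clique in at most two vertices and the center's clique can only be adjacent to cliques sharing an index with it. The extra freedom of $K_{1,d}$-freeness over a degree bound is thus essential to reaching the $\Theta(d/\log d)$ ratio; restricting to bounded-degree instances, as you propose, makes the statement strictly harder than what is being claimed.
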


We remark that for graphs of maximum degree~$d$, which are also $K_{1,d}$-free, 
\IS admits a polynomial time $\tilde{O}(d/\log^2 
d)$-approximation~\cite{bansal2018lovasz} 
(where the $\tilde{O}$-notation hides poly($\log\log d$) factors) and this is 
tight~\cite{austrin2011inapproximability} (up to poly($\log\log d$) factors) 
under the Unique Games Conjecture. Also, assuming \PP~$\neq$~\NP no $\Oh(d/\log^4 
d)$-approximation exists~\cite{chan2016approximation}. This means that the 
hardness of \cref{thm:k1dfree} together with the algorithm 
in~\cite{bansal2018lovasz} give a separation between graphs of maximum degree 
$d$ and $K_{1,d}$-free graphs in terms of approximation.

Then in \cref{sec:nofpas} we study the existence of fixed-parameter 
approximation algorithms (cf.~\cite{DBLP:journals/algorithms/FeldmannSLM20}) for 
the \IS problem in $H$-free graphs. We show the following strengthening of 
\cref{thm:bonnet-w1hard}, which also gives (almost) tight runtime lower bounds 
assuming the ETH or the randomized Gap-ETH (for more information about 
complexity assumptions used in the following theorems see \cref{sec:prelims}).

\begin{restatable}{theorem}{noFPAS}\label{thm:noFPAS}
Let $s \geq 5$ be a constant, and let $\mathcal{G}$ be the class of graphs that
do not contain any of the following induced subgraphs:
\begin{compactenum}
 \item a cycle on at least~5 and at most $s$ vertices, 
 \item the star~$K_{1,5}$, and
 \item 
 \begin{compactenum}[(i)]
 \item the star $K_{1,4}$, or
 \item a cycle on 4 vertices and any tree with two 
vertices of degree at least 3 at distance at most~$s$.
 \end{compactenum}
\end{compactenum}
The \IS problem on $\mathcal{G}$ does not admit the following:
\begin{compactenum}[(a)]
\item an exact algorithm with runtime $f(k) \cdot n^{o(k/\log k)}$, for any 
computable function $f$, under the ETH,
\item a $\beta$-approximation algorithm  with runtime $f(k) \cdot n^{\Oh(1)}$ 
for some constant $\beta>1$ and any computable function $f$, under the 
deterministic Gap-ETH, %
\label{it:NoFptApprox}
\item a $\beta$-approximation algorithm with runtime $f(k) \cdot n^{o(\sqrt{k})}$ for 
some constant $\beta>1$ and any computable function $f$, under the randomized 
Gap-ETH.\footnote{In the conference version of this 
paper~\cite{DBLP:conf/wg/DvorakFRR20} we mistakenly claimed our reduction 
excludes an algorithm with running time $f(k) 
\cdot n^{o(k)}$.} %
\end{compactenum}
\end{restatable}

By gap amplification using the lexicographical graph product, we are able to 
strengthen statement \eqref{it:NoFptApprox} of \cref{thm:noFPAS}, but we need 
to consider a larger class of graphs to obtain the lower bound.
We say two vertices $u, v$ are \emph{twins} if, apart from the adjacency between them, their neighborhoods are the 
same, i.e., $N(u) \setminus \{v\} = N(v) \setminus \{u\}$.
\begin{restatable}{theorem}{NoConstantFptApprox}\label{thm:NoConstantFptApprox}
 Let $s \geq 5$ be a constant, and let $\mathcal{G}'$ be the class of graphs that
do not contain any of the following induced subgraphs:
\begin{compactenum}
 \item a cycle on at least 5 and at most $s$ vertices, and
 \item any tree without twins and with two vertices of degree at least 3 at distance at most~$s$.
\end{compactenum}
 Then for any constant $\beta > 1$, the \IS problem on $\mathcal{G}'$ does not admit a $\beta$-approximation algorithm  with runtime $f(k) \cdot n^{\Oh(1)}$ for any computable function $f$, under the deterministic Gap-ETH.
\end{restatable}

In contrast with statement \eqref{it:NoFptApprox} of \cref{thm:noFPAS}, \cref{thm:NoConstantFptApprox} refuses any constant-factor FPT approximation for the \IS problem on the class $\mathcal{G}'$.
However, the gap amplification works only for forbidden graphs without twins.
Thus, the class $\mathcal{G}'$ is larger than the class $\mathcal{G}$ defined in \cref{thm:noFPAS}, and the family of forbidden subgraphs of $\mathcal{G}'$ is exactly the family of forbidden subgraphs of $\mathcal{G}$ restricted to graphs without twins.
We prove \cref{thm:NoConstantFptApprox}  in \cref{sec:noconstant}.

Finally, in \cref{sec:parameterH} we study a slightly different setting, where 
the graph~$H$ is not considered to be fixed.
As mentioned before, \IS is known to be polynomial-time solvable in $P_t$-free 
graphs for $t \leq 6$. The algorithms for increasing values of $t$ get 
significantly more complicated and their complexity increases. Thus it is 
natural to ask whether this is an inherent property of the problem and can be 
formalized by a runtime lower bound when parameterized by $t$.

We give an affirmative answer to this question, even if the forbidden family is
not a family of paths: note that the independent set number $\alpha(P_t)$ of a path on $t$ 
vertices is $\lceil t/2\rceil$. 

\begin{restatable}{proposition}{parameterH}\label{thm:parameterH}
For any integer $d$, let $\mathcal{H}_d$ be a class of graphs so that 
$\alpha(H) > d$ for every $H \in \mathcal{H}_d$, and let $\zeta$ be 
any function in $\omega(1)$.
Consider an instance~$(G,k)$ of \IS and let~$d$ be the minimum value for which 
$G$ is $\mathcal{H}_d$-free.
The \IS problem is \Wone-hard parameterized by $d$ and cannot be solved in 
$n^{o(d)}$ time, unless the ETH fails. Furthermore, no $d^{o(1)}$-approximation 
can be computed in $f(d)n^{\Oh(1)}$ time under ETH, and no independent set of 
size~$\zeta(d)$ can be computed in $f(d)n^{\zeta(d)}$ time under the 
deterministic Gap-ETH.
\end{restatable}

We also study the special case when $H=K_{1,d}$ and consider the 
inapproximability of the problem parameterized by both $\alpha(K_{1,d})=d$ and 
$k$. Unfortunately, for the parameterized version we do not obtain a clear-cut 
statement as in \cref{thm:k1dfree}, since in the following theorem $d$ cannot 
be chosen independently of $k$ in order to obtain an inapproximability gap.

\begin{restatable}{proposition}{starsparsified}\label{thm:starsparsified}
Let $\eps>0$ be any constant, $\xi(k)=2^{(\log k)^{1/2+\eps}}$, and $\zeta$ be 
any function in $\omega(1)$.
The \IS problem in $K_{1,d}$-free graphs has no $d/\xi(k)$- and 
no $d/\zeta(k)$-approx\-imation algorithm with runtime $f(d,k) \cdot n^{\Oh(1)}$ 
for any computable function~$f$, unless the deterministic Gap-ETH or the 
Strongish Planted Clique Hypothesis fails, respectively.
\end{restatable}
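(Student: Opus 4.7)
My plan is to mimic the structure of the reduction used to prove \cref{thm:k1dfree}---which takes a \LC instance and outputs an \IS instance on a $K_{1,d}$-free graph---but with the polynomial-time inapproximability of \LC replaced by a parameterized one that holds under the deterministic Gap-ETH. The skeleton is: stronger input hardness $\Rightarrow$ essentially the same reduction $\Rightarrow$ stronger output hardness in the FPT regime.

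\textbf{Starting point.} I would invoke a parameterized inapproximability of \LC under deterministic Gap-ETH whose gap, when parameterized by the number $k$ of super-nodes per side, is $\xi(k)=2^{(\log k)^{1/2+\eps}}$. Such hardness is known to follow from Manurangsi-type Gap-ETH-based inapproximability of Densest-$k$-Subgraph (or $k$-Clique) via standard gap-preserving FPT reductions that introduce a bipartite super-node structure while preserving the $\xi(k)$-gap.

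\textbf{Reduction.} I would then apply the same FGLSS-style construction as in \cref{thm:k1dfree}: for each super-node create a cloud of $(\text{super-node}, \text{label})$ vertices, make each cloud a clique, and put an edge between two vertices in distinct clouds whenever their labels violate a \LC constraint. Independent sets of the produced graph $G$ then correspond to partial assignments that are simultaneously consistent on their support, so $\alpha(G)$ scales linearly with the number of super-nodes that can be satisfied. Calibrating the alphabet size so that the independence number of every neighborhood is at most $d-1$ ensures that $G$ is $K_{1,d}$-free. Combining the inherent $d$-factor loss of this construction with the $\xi(k)$-gap for \LC yields the promised $d/\xi(k)$-inapproximability for \IS in $K_{1,d}$-free graphs.

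\textbf{Main obstacle.} The main difficulty is choosing the parameters coherently: the alphabet size of the \LC instance simultaneously controls $d$ and the local structure responsible for $K_{1,d}$-freeness, while $k$ must remain the FPT parameter so that the Gap-ETH hardness lifts to the output. I expect the delicate step to be verifying the uniform upper bound of $d-1$ on the independence number of every neighborhood in $G$, which relies on a ``regularity'' or ``bounded label fan-out'' property that the starting \LC instance must satisfy (achievable by standard alphabet-reduction or right-degree-regularization preprocessing of \LC). This is also the reason, as the authors remark before the statement, that $d$ cannot be chosen independently of $k$: the regularization step fixes a functional relationship between the alphabet and the number of super-nodes, so only in the regime $d \gg \xi(k)$ does the bound $d/\xi(k)$ become informative.
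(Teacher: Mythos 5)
Your high-level skeleton (a Gap-ETH-based parameterized gap for a colored 2-CSP, followed by the same style of reduction as in \cref{thm:k1dfree}) matches the paper, but the proposal omits the one genuinely new ingredient and replaces it with a step that does not work. The available starting point (\cref{thm:hard-detgapETH}) gives a $\xi(\ell)/\ell$ gap for \MCSI{} whose constraint graph $J$ is \emph{complete}, and the whole point of the paper's intermediate result \cref{thm:sparse-MCSI} is to sparsify $J$ down to maximum degree $t$ (via the constructive Erd\H{o}s--Gallai theorem), weakening the gap to roughly $\xi(\ell)/t$. That degree bound is precisely what makes the output of \cref{lemma:K1dfreereduction} $K_{1,d}$-free with $d=2t+2$, and it is also where the factor $d$ in the final bound $d/\xi(k)$ comes from: the reduction itself is gap-preserving (independent sets count satisfied constraints one-for-one), so there is no ``inherent $d$-factor loss of the construction'' as you assert. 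Your proposal never reduces the degree of the constraint graph and instead tries to obtain $K_{1,d}$-freeness by ``calibrating the alphabet size.'' That cannot work: in a clique-cloud construction each cloud is a clique, so an independent set inside the neighborhood of a vertex takes at most one vertex per cloud regardless of the alphabet, and the number of clouds that neighborhood meets is governed by $\deg_J$. With $J$ complete you get induced stars $K_{1,\Theta(\ell)}$ no matter how the alphabet is chosen, so the claimed uniform bound of $d-1$ on the independence number of neighborhoods is unobtainable without first sparsifying $J$.

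A secondary deviation: you describe the gadget with one vertex per (super-node, label), so that independent sets correspond to consistent partial assignments and $\alpha$ counts labeled super-nodes. The paper's \cref{lemma:K1dfreereduction} instead has one vertex per edge of $G$ (per satisfying pair of a constraint), so that $\alpha$ counts satisfied constraints; for a sparse constraint graph your variant has an additional soundness issue, since a small fraction of satisfiable constraints does not directly bound the number of super-nodes that can be simultaneously and consistently labeled. Your closing intuition about why $d$ cannot be decoupled from $k$ (the bound is only informative when $d$ exceeds $\xi(k)$) is correct, but the reason is that the sparsified gap is $t/\xi(\ell)$ with $d=2t+2\le 2\ell$, not a functional relationship imposed by alphabet regularization.
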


Note that this in particular shows that if we allow $d$ to grow as a 
polynomial~$k^\eps$ for any constant~$0< \eps<1/2$, then no 
$k^{\delta}$-approximation is possible for any~$\delta<\eps$ (since 
$\xi(k)=k^{o(1)}$), under the deterministic Gap-ETH. Under the Strongish 
Planted Clique Hypothesis, we can even allow $d$ to grow arbitrarily slowly in 
$k$ and still get an approximation lower bound. This indicates that the 
$(\frac{d-1}{2}+\delta)$-approximation for \mbox{$K_{1,d}$-free} 
graphs~\cite{DBLP:conf/soda/Halldorsson95} is likely to be best possible (up to 
sub-polynomial factors), even when parameterizing by $k$ and~$d$.
The proofs of \cref{thm:parameterH} and \cref{thm:starsparsified} can be found in~\cref{sec:parameterH}.

\section{Preliminaries}
\label{sec:prelims}

All our hardness results for \IS are obtained by reductions from some variant 
of the \textsc{Maximum Colored Subgraph Isomorphism} (\MCSI{}) problem. This 
optimization problem has been widely studied in the literature, both to obtain 
polynomial-time and parameterized inapproximability results, but also in its 
decision version to obtain parameterized runtime lower bounds.
We note that by applying standard transformations, \MCSI{} contains 
the well-known problems \textsc{Label Cover}~\cite{laekhanukit2014parameters} 
and \textsc{Binary CSP}~\cite{Lokshtanov17}: for \textsc{Binary CSP} the graph 
$J$ is a complete graph, while for \textsc{Label Cover} $J$ is usually 
bipartite.

\OptProb{Maximum Colored Subgraph Isomorphism (\MCSI{})}
{A graph $G$, whose vertex set is partitioned into subsets $V_1,\dots,V_\ell$, 
and a graph $J$ on vertex set $\{1,\dots,\ell\}$.}
{Find an assignment $\phi: V(J) \to V(G)$, where $\phi(i) \in V_i$ for every $i 
\in [\ell]$, that maximizes the number $S(\phi)$ of satisfied edges, i.e.,
\newline $S(\phi) := \bigl|\bigl\{ ij \in E(J) \;|\; \phi(i)\phi(j) \in 
E(G)\bigr\}\bigr|.$
}

Given an instance $\Gamma=(G,V_1,\ldots,V_\ell,J)$ of \MCSI{}, we refer to the 
number of vertices of $G$ as the \emph{size} of~$\Gamma$.  Any assignment 
$\phi: 
V(J) \to V(G)$, such that for every $i$ it holds that $\phi(i) \in V_i$, is 
called a \emph{solution of~$\Gamma$}. The \emph{value} of a solution~$\phi$ is 
$\val(\phi) := S(\phi)/|E(J)|$, i.e., the fraction of satisfied edges. The value 
of the instance $\Gamma$, denoted by $\val(\Gamma)$, is the maximum value of any 
solution of $\Gamma$.

When considering the decision version of \MCSI{}, i.e., determining whether 
$\val(\Gamma)=1$ or $\val(\Gamma)<1$, a classic hardness result for \MC
(i.e., if $J$ is a complete graph)
implies that that under the \emph{Exponential Time Hypothesis (ETH)} the problem cannot be 
solved in \mbox{$f(\ell) \cdot n^{o(\ell)}$} time for any computable function $f$.
For the optimization version of \MCSI{}, an $\alpha$-approximation is a solution 
$\phi$ with $\val(\phi)\geq 1/\alpha$. When $J$ is a complete graph, a result by 
Dinur and Manurangsi
\cite{DBLP:conf/innovations/DinurM18,DBLP:journals/corr/abs-1805-03867}
states that there is no $\ell/\xi(\ell)$-approximation algorithm, where 
$\xi(\ell)=2^{(\log\ell)^{1/2+\eps}}=\ell^{o(1)}$ for any constant $\eps>0$, 
with runtime $f(\ell) \cdot n^{O(1)}$ for any computable function~$f$, unless 
the \emph{deterministic Gap-ETH} fails (see~\cref{thm:hard-detgapETH}). This 
hypothesis assumes that there exists some constant $\delta>0$ such that no 
deterministic $2^{o(n)}$ time algorithm for \textsc{3-SAT} can decide whether 
all or at most a $(1-\delta)$-fraction of the clauses can be satisfied. A recent 
result by Manurangsi~\cite{DBLP:conf/soda/Manurangsi20} uses an even stronger 
assumption, which also rules out randomized algorithms, and in turn obtains a 
better runtime lower bound at the expense of a worse approximation lower 
bound:\footnote{The result is implicit from 
\cite[Theorem~2.1]{DBLP:conf/soda/Manurangsi20} by setting $t=2$ and using a 
straight-forward reduction from \textsc{Label Cover} to \MCSI{}, where each of 
the $\ell$ vertices of $U$ is expanded into a colour class and an edge exists if 
the respective projected labels are the same for the unique (as $t=2$) shared 
neighbor in $V$.}
when $J$ is a complete graph, there is no $\beta$-approximation algorithm for 
\MCSI{} with runtime $f(\ell) \cdot n^{o(\ell)}$ for any computable function $f$ 
and any constant $\beta$, under the \emph{randomized Gap-ETH}. This assumes 
that there exists some constant $\delta>0$ such that no randomized $2^{o(n)}$ 
time algorithm for \textsc{3-SAT} can decide whether all or at most a 
$(1-\delta)$-fraction of the clauses can be satisfied. 
Another related conjecture that was recently used to obtain lower bounds for 
\MCSI{} where $J$ is a clique, is the \emph{Strongish Planted Clique 
Hypothesis}. It states that no randomized algorithm with runtime $n^{o(\log n)}$ 
can find a planted clique of size $n^\delta$ for some $0<\delta<1/2$ in a random 
graph on $n$ vertices. Manurangsi et 
al.~\cite{manurangsi_et_al:LIPIcs.ITCS.2021.10} prove that under this 
conjecture, no $f(\ell) \cdot n^{O(1)}$ time algorithm can compute a 
$o(\ell)$-approximation to \MCSI{} (see~\cref{thm:hard-detgapETH}).

For our results we will often need the special case of \MCSI{} when the 
graph~$J$ has bounded degree. We define this problem in the following.

\OptProb{Degree-$t$ Maximum Colored Subgraph Isomorphism (\MCSI{$t$})}
{A graph $G$, whose vertex set is partitioned into subsets $V_1,\dots,V_\ell$, 
and a graph $J$ on vertex set $\{1,\dots,\ell\}$ and maximum degree $t$.}
{Find an assignment $\phi: V(J) \to V(G)$, where $\phi(i) \in V_i$ for every $i 
\in [\ell]$, that maximizes the number $S(\phi)$ of satisfied edges, i.e.,
\newline $S(\phi) := \bigl|\bigl\{ ij \in E(J) \;|\; \phi(i)\phi(j) \in 
E(G)\bigr\}\bigr|.$
}

The bounded degree case has been considered before, and we harness some of the 
known hardness results for \MCSI{$t$} in our proofs.
First, a reduction of Marx~\cite{Marx10} implies that assuming the ETH,
\MCSI{$3$} cannot be solved in time $f(\ell) \cdot n^{o(\ell/\log \ell)}$,
for any computable function $f$ (see also Marx and 
Pilipczuk~\cite[Theorem~5.5]{MarxP15}).
We also use a  polynomial-time approximation lower bound
given by Laekhanukit~\cite{laekhanukit2014parameters}, where~$t$ can be set to any 
constant and the approximation gap depends on~$t$ (see~\cref{thm:Bundit_polytime}). The complexity assumption of 
this reduction is that \NP-hard problems do not have polynomial time Las 
Vegas algorithms, i.e., \NP~$\neq$~\ZPP.
For parameterized approximations, we use a result by Lokshtanov et al.~\cite{Lokshtanov17}, who obtain a 
constant approximation gap for the case when $t=3$ (see~\cref{thm:MCSIHardness}).
It seems that this result 
for parameterized algorithms is not easily generalizable to arbitrary 
constants~$t$ so that the approximation gap would depend only on~$t$, as in the 
result for polynomial-time algorithms provided by 
Laekhanukit~\cite{laekhanukit2014parameters}: neither the techniques found 
in~\cite{laekhanukit2014parameters} nor those of~\cite{Lokshtanov17} seem to be 
usable to obtain an approximation gap that depends only on $t$ but not the 
parameter $\ell$.
However, we develop a weaker parameterized inapproximability 
result for the case when 
$t\geq\xi(\ell)=\ell^{o(1)}$ or $t\geq\zeta(\ell)=\omega(1)$ 
(see~\cref{thm:sparse-MCSI} in~\cref{sec:parameterH}), 
and use it to prove \cref{thm:starsparsified}.

\section{Approximation for $K_{a,b}$-free Graphs}\label{sec:Kab}
In this section we give a polynomial-time 
$\Oh\bigl((a+b)^{1/a} \cdot \alpha(G)^{1-1/a}\bigr)$-approximation algorithm for $\IS$ on 
$K_{a,b}$-free graphs, where $\alpha(G)$ is the size of a maximum independent 
set in the input graph $G$. The algorithm is a generalization of a known local 
search procedure. Note that it asymptotically matches the approximation factor 
of the $(\frac{d-1}{2}+\delta)$-approximation algorithm for $K_{1,d}$-free 
graphs of Halld{\'{o}}rsson~\cite{DBLP:conf/soda/Halldorsson95} by setting $a=1$ and 
$b=d$. We note here that the following theorem was independently discovered by 
Bonnet, Thomass{\'{e}}, Tran, and Watrigant~\cite{DBLP:conf/esa/BonnetTTW20}.

\kabtheorem*
\begin{proof}
The algorithm first computes a maximal independent set $I \subseteq V(G)$ in the 
given graph $G$, which can be done in linear time using a simple greedy 
approach. Since $I$ is maximal, every vertex in $V(G) \setminus I$ has at least 
one neighbor in $I$. Now, we consider the vertices in $V(G) \setminus I$ that 
are neighbors to at most $a-1$ vertices of~$I$, and call this set $V_1$. Let 
$C\subseteq I$ be a set of size $c\in[a-1]$, and let $V_C := \{ v \in V_1 \mid 
N(v)\cap I = C\}$. If the graph induced by  $V_C\cup C$ contains an independent 
set $I'$ of size $|C|+1$, then we can find it in time 
$n^{\Oh(|C|+1)}=n^{\Oh(a)}$. Furthermore, $(I\setminus C)\cup I'$ is an 
independent set, since no vertex of $V_C\cup C$ is adjacent to any vertex of 
$I\setminus C$, and $(I\setminus C)\cup I'$ is larger by one than $I$. Thus the 
algorithm replaces $I\setminus C$ by $I'$ in $I$. The algorithm repeats this 
procedure until the largest independent set in each subgraph induced by a set 
$V_C\cup C$ (defined for the current~$I$) is of size at most $|C|$. At this 
point the algorithm outputs $I$.

Let $k=|I|$ be the size of the output at the end of the algorithm. We claim 
that $\alpha(G)\leq (a-1)k^{a-1}+(b-1) k^a=\Oh\bigl((a+b)k^a\bigr)$ and this would 
prove the theorem, since then $k=\Omega\bigl((\frac{\alpha(G)}{a+b})^{1/a}\bigr)$, which 
implies that $I$ is an $\Oh\bigl((a+b)^{1/a} \cdot \alpha(G)^{1-1/a}\bigr)$-approximation.

To show the claim, first note that the family $\bigl\{V_C\ \mid C \subseteq I \text{ and } |C| \in [a-1]\bigr\}$
is a partition of $V_1$ into at most  $\sum_{c=1}^{a-1}{k\choose c}$ many sets.
For each relevant $C$, no subgraph induced by a set $V_C\cup C$ contains an independent set larger than $|C|$,
and thus if $I^*$ denotes a
maximum independent set of $G$, then $\bigl|(V_C\cup C)\cap I^*\bigr|\leq |C|$.
Thus, 
\[
\bigl|(V_1\cup I)\cap I^*\bigr|\leq \sum_{c=1}^{a-1}c{k\choose c} = \sum_{c=1}^{a-1} k\binom{k-1}{c-1} \leq \sum_{c=1}^{a-1}k^c\leq (a-1)k^{a-1}.  
\]

Now consider the remaining set $V_2: = V(G) \setminus (V_1 \cup I)$, and observe 
that every $v \in V_2$ has at least~$a$ neighbors in $I$ due to the definition 
of~$V_1$. For each $D \subseteq I$ with $|D|=a$, we construct a set $V_{D}$ by 
fixing an arbitrary subset $S(v) \subseteq (N(v) \cap I)$ of size $a$ for every 
$v\in V_2$, and putting $v$ into~$V_D$ if and only if $S(v) = D$. Observe that 
these sets $V_D$ form a partition of $V_2$ of size at most ${k \choose a}$. We 
claim that each $V_D$ induces a subgraph of $G$ for which every independent set 
has size less than~$b$. Assume not, and let $I'$ be an independent set in $V_D$ 
of size~$b$. But then $D\cup I'$ induces a $K_{a,b}$ in $G$, since every vertex 
of $I'\subseteq V_D$ is adjacent to every vertex of~$D\subseteq I$. As this 
contradicts the fact that $G$ is $K_{a,b}$-free, we have $|V_D\cap I^*|\leq 
b-1$, and consequently $|V_2\cap I^*|\leq (b-1){k \choose a}\leq (b-1) k^a$. 
Together with the above bound on the number of vertices of $I^*$ in $V_1\cup I$ 
we get 
\[
\alpha(G)=|I^*|\leq (a-1)k^{a-1}+(b-1) k^a, 
\]
which concludes the proof.
\end{proof}

\section{Polynomial Time Inapproximability in $K_{1,d}$-free Graphs} \label{sec:k1d}
In this section, we show polynomial time approximation lower bounds for \IS\ on 
$K_{1,d}$-free graphs. 

\starfreepoly*

For that, we reduce from the \MCSI{$t$} problem, and leverage the lower bound by  Laekhanukit~\cite[Theorem~$6$]{laekhanukit2014parameters}.
Let us point out that the original statement of the lower bound by Laekhanukit~\cite{laekhanukit2014parameters} is in terms of the \textsc{Label Cover} problem,
but, as we already mentioned, this is a special case of \MCSI{}.

\begin{theorem}[Laekhanukit~\cite{laekhanukit2014parameters}]
\label{thm:Bundit_polytime}
Let $\Gamma = (G,V_1,\dots,V_\ell,J)$ be an instance of \MCSI{$t$} where $J$ is 
a bipartite graph. Assuming \ZPP~$\neq$~\NP, there exist constants $t^\star$ 
and $c$ such that for any constant $\eps >0$ and any $t\geq t^\star$, there is 
no polynomial time algorithm that can distinguish between the  two cases: 
\begin{compactenum}
 \item {(YES-case)} $\val(\Gamma) \geq 1-\eps$, and 
 \item {(NO-case)} $\val(\Gamma) \leq c\log(t)/t+\eps$.
\end{compactenum}
\end{theorem}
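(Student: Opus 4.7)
The plan is to derive \cref{thm:k1dfree} by a reduction from the \MCSI{$t$} problem on bipartite instances, to which \cref{thm:Bundit_polytime} directly applies. I would take an instance $\Gamma = (G,V_1,\ldots,V_\ell,J)$ of \MCSI{$t$} and build an FGLSS-style auxiliary graph $H$ whose maximum independent set is proportional to the value of $\Gamma$, while ensuring that $H$ is $K_{1,d}$-free for some $d = \Theta(t)$. The $\Omega(t/\log t)$ gap guaranteed by \cref{thm:Bundit_polytime} will then produce the desired $\gamma = \Omega(d/\log d)$ inapproximability gap for \IS on $K_{1,d}$-free graphs.

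Concretely, for every edge $ij \in E(J)$ I create a cloud $C_{ij}$ of vertices $(u,v)$, one per edge $uv \in E(G)$ with $u \in V_i$ and $v \in V_j$; each cloud is turned into a clique. For any two edges $ij, ij' \in E(J)$ sharing an endpoint $i$, I add a \emph{conflict edge} between $(u,v) \in C_{ij}$ and $(u',v') \in C_{ij'}$ whenever $u \neq u'$, and symmetrically on $j$. An independent set $I$ in $H$ picks at most one vertex per cloud, and the picked vertices are pairwise consistent on shared endpoints, hence induce a partial assignment of $\Gamma$ satisfying exactly $|I|$ edges of $J$ (which we may then extend arbitrarily to a full assignment). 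Conversely, any full assignment $\phi$ yields an independent set of size $|E(J)| \cdot \val(\phi)$ by selecting, in each satisfied cloud $C_{ij}$, the vertex $(\phi(i),\phi(j))$. Thus $\alpha(H) = |E(J)| \cdot \val(\Gamma)$.

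To bound $d$, I inspect the neighborhood of an arbitrary $(u,v) \in C_{ij}$: its neighbors lie inside $C_{ij}$ itself (a clique), inside at most $\deg_J(i)-1 \leq t-1$ other clouds $C_{ij'}$ sharing $i$, and symmetrically inside at most $t-1$ clouds $C_{i'j}$ sharing $j$. Since each of these at most $2t-1$ clouds contributes at most one vertex to any independent set, the neighborhood of $(u,v)$ contains no independent set of size $2t$, so $H$ is $K_{1,2t}$-free. Combining with \cref{thm:Bundit_polytime}, distinguishing $\val(\Gamma) \geq 1-\eps$ from $\val(\Gamma) \leq c\log(t)/t + \eps$ is hard under $\ZPP \neq \NP$, and via $\alpha(H) = |E(J)| \cdot \val(\Gamma)$ this yields hardness of approximating $\alpha(H)$ within a factor $\gamma = \Theta(t/\log t) = \Theta(d/\log d)$ on $K_{1,d}$-free graphs, after choosing $\eps$ sufficiently small. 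The main obstacle I anticipate is the careful accounting for the $K_{1,d}$-freeness argument, in particular making sure that the upper bound on the number of neighboring clouds is tight enough that $d$ remains linear in $t$, so that the final gap $\Theta(t/\log t)$ transfers without loss to $\Theta(d/\log d)$; no heavy new machinery is needed beyond what \cref{thm:Bundit_polytime} provides.
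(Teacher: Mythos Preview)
Your proposal does not address the stated theorem. \Cref{thm:Bundit_polytime} is a result of Laekhanukit that the paper \emph{cites} as a black box; the paper contains no proof of it, and your write-up makes no attempt to prove it either. Instead, you have written a proof of \cref{thm:k1dfree}, which is the theorem the paper \emph{derives from} \cref{thm:Bundit_polytime}. So as a proof of the statement you were given, this is simply off target.

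If one reads your proposal as a proof of \cref{thm:k1dfree}, then it is essentially the same argument as the paper's. The paper packages the FGLSS-style reduction into \cref{lemma:K1dfreereduction}: vertices of $G'$ correspond to edges of $G$, each $V'_{ij}$ is a clique, and conflict edges join vertices in clouds sharing an endpoint of $J$ whose $G$-edges disagree there. The paper then shows $G'$ is $K_{1,d}$-free for $d\geq 2t+2$, proves the two-way correspondence between $\val(\Gamma)$ and $\alpha(G')/|E(J)|$, and plugs in \cref{thm:Bundit_polytime} exactly as you do. Your cloud-counting argument for $K_{1,d}$-freeness is the same idea as the paper's degree argument (you get the marginally tighter bound $d\geq 2t$, but this is immaterial for the asymptotic $\Theta(d/\log d)$ conclusion). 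In short: right reduction, right analysis, wrong theorem label.
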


We use a standard reduction from \MCSI{} to \IS, which can be seen as a variant of the so-called \emph{FGLSS-graph}~\cite{DBLP:journals/jacm/FeigeGLSS96}. For instances of 
\MCSI{$t$} with bounded degree $t$ gives the following lemma.

\begin{restatable}{lemma}{starfreereduction}
\label{lemma:K1dfreereduction}
Let $\Gamma = (G,V_1,\dots,V_\ell,J)$ be an instance of \MCSI{$t$}. Given 
$\Gamma$, in polynomial time we can construct 
an instance $G'$ of \IS\  such that
\begin{compactenum}
\item $G'$ does not have $K_{1,d}$ as an induced subgraph for any $d \geq 
2t +2 $,
\item if $\val(\Gamma) \geq \mu$ then $G'$ has an independent set of 
size at least $\mu |E(J)|$, and
\item if  $\val(\Gamma) \leq \nu$ then every independent set of 
$G'$ has size at most $\nu |E(J)|$.
 \end{compactenum}
\end{restatable}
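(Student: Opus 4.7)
The plan is to apply the standard edge-pair reduction from \MCSI{} to \IS. For every edge $ij \in E(J)$ and every $(u,w) \in V_i \times V_j$ with $uw \in E(G)$, introduce a vertex of $G'$, which I will denote by the triple $(ij,u,w)$; two triples are adjacent in $G'$ precisely when they \emph{conflict}, i.e., share a $J$-vertex $k$ to which they assign distinct $G$-vertices. Then $G'$ has at most $|E(J)| \cdot |V(G)|^2$ vertices and can be built in polynomial time.

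For items~2 and~3, I show that independent sets in $G'$ correspond to solutions of $\Gamma$. If $\phi$ is an assignment with $\val(\phi) \geq \mu$, then $\bigl\{(ij,\phi(i),\phi(j)) : ij \in E(J),\ \phi(i)\phi(j) \in E(G)\bigr\}$ is an independent set of size $S(\phi) \geq \mu|E(J)|$, because any two of its triples that share a $J$-vertex $k$ agree on the value $\phi(k)$. Conversely, given an independent set $I$ of size $s$ in $G'$, I define $\phi(k)$ to be the common $G$-value assigned to $k$ by the triples of $I$ (arbitrarily on $J$-vertices untouched by $I$); this is well defined since $I$ is conflict-free. Distinct triples on a common $J$-edge always conflict, so the triples of $I$ use pairwise distinct $J$-edges, each of which is satisfied by $\phi$; hence $S(\phi) \geq s$ and $\val(\Gamma) \geq s/|E(J)|$, giving item~3.

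For item~1, fix any $v = (ij,u,w) \in V(G')$ and consider an independent set $S$ inside $N(v)$. Every neighbor of $v$ conflicts with $v$ and thus lies on a $J$-edge incident to $\{i,j\}$; two neighbors sharing a $J$-edge also conflict with each other, so $S$ contains at most one triple per such $J$-edge. Because $\deg_J(i), \deg_J(j) \leq t$ and $ij$ is counted in both degrees, the number of $J$-edges incident to $\{i,j\}$ is at most $\deg_J(i) + \deg_J(j) - 1 \leq 2t - 1$. Consequently $|S| \leq 2t-1 < 2t+2 \leq d$, so $v$ cannot be the center of any induced $K_{1,d}$.

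The content is essentially routine bookkeeping; the only mild subtlety is in the last paragraph, where I must avoid double-counting the edge $ij$ when summing $\deg_J(i) + \deg_J(j)$ and must observe that within a single $J$-edge all competing triples form a clique in $G'$. These two observations yield the bound $2t-1$, comfortably below the claimed threshold $2t+2$, so the construction meets the lemma's guarantees with a safe margin.
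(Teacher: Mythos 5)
Your construction is exactly the paper's conflict graph (one vertex per edge of $G$ lying between $J$-adjacent colour classes, two vertices adjacent precisely when they disagree on a shared $J$-vertex), and your proofs of items~2 and~3 coincide with the paper's. The only divergence is item~1: the paper counts how many of the cliques $V'_e$ an induced star can intersect and derives $\Delta(J) \geq (d-1)/2$, whereas you directly bound the independence number of any neighbourhood by $\deg_J(i)+\deg_J(j)-1 \leq 2t-1$; your argument is correct and in fact gives $K_{1,d}$-freeness already for $d \geq 2t$, slightly stronger than the stated threshold $2t+2$.
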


\begin{proof}
We first describe the construction of $G'$ given $\Gamma = 
(G,V_1,\dots,V_\ell,J)$, where we denote by $E_{ij}$ the edge set between $V_i$ 
and $V_j$ for each edge $ij\in E(J)$. 
The graph $G'$ has a vertex $v_e$ for each edge $e$ of~$G$, an edge between 
$v_e$ and $v_f$ if $e,f\in E_{ij}$ for some $ij\in E(J)$, and an edge between 
$v_e$ and $v_f$ if $e\in E_{ij}$ and $f\in E_{ij'}$ and $e$ and $f$ do not share 
a vertex in $G$ for some three vertices~$i,j,j'\in[\ell]$ of $J$ such that 
$ij\in E(J)$ and $ij'\in E(J)$. Note that the vertex set $V'_{ij}=\{v_e\in 
V(G')\mid e\in E_{ij}\}$ induces a clique in $G'$.
This finishes the construction of $G'$. 
See Figure~\ref{fig:K_1d_construction} for better understanding of the construction.
\begin{figure}[h]
 \centering
 \includegraphics{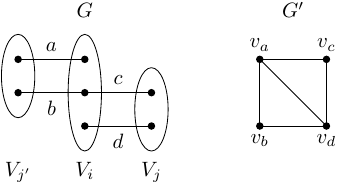}
 \caption{Example of the construction of the graph $G'$ from $G$ for $J$ being a path on 3 vertices.}
 \label{fig:K_1d_construction}
\end{figure}

To see the first part of the lemma, for the sake of contradiction, let us 
suppose~$G'$ has a $K_{1,d}$ as an induced subgraph for $d \geq 2t +2$. We know 
that for any $e\in E(J)$ the vertices in $V'_{e}$ form a clique in $G'$, so the 
star $K_{1,d}$ can intersect with a fixed $V'_{e}$ in at most two vertices 
of which one must be the center vertex of $K_{1,d}$ with degree $d$. As 
$K_{1,d}$ has $d+1$ vertices, this means there are (at least) $d$ distinct 
vertex sets $V'_{e_1}, V'_{e_2}, \ldots, V'_{e_d}$ of $G$ that intersect the 
$K_{1,d}$ for some edges $e_1,e_2,\ldots,e_d\in E(J)$. Without loss of 
generality, let the center vertex of the $K_{1,d}$ come from $V'_{e_1}$. Note 
that the $K_{1,d}$ has an edge between a vertex from $V'_{e_1}$ and a vertex 
from $V'_{e_z}$ for each $z \in \{2,\ldots,d\}$. Hence if $e_1 = jj'$, we have 
that either $j \in e_z$ or $j' \in e_z$ for every $z \in [d]$ by the 
construction of $G'$. This means that either $j$ or $j'$ has at least $(d-1)/2$ 
neighbours in $J$. That is, the maximum degree of $J$ is at least $(d-1)/2$.
As $d \geq 2t+2$, we obtain that the maximum degree of $J$ is more than $t$,
which is a contradiction with the definition of \MCSI{$t$}.

Now, to see the second claim of the lemma, first we need to show that if 
$\val(\Gamma) \geq \mu$, then $G'$ has an independent set of size at least 
$\mu|E(J)|$. To see that, let $\phi: V(J) \rightarrow V(G)$ be a mapping that 
satisfies at least a $\mu$-fraction of the edges of $E(J)$. We claim that $S= 
\{v_{uw}\in V'_{ij} \mid ij\in E(J), \phi(i) = u, \phi(j) = w\}$ is an 
independent set of size at least $\mu|E(J)|$ in~$G'$. Since $\phi$ satisfies at 
least $\mu$-fraction of edges, $S$ has size at least $\mu|E(J)|$. So all we need 
to show is that $S$ is indeed an independent set. Suppose it was not the case, 
i.e., there exist $v_e, v_f \in S$ that are adjacent in~$G'$. By construction of 
$G'$ there can be an edge between $v_e$ and $v_f$ only if $e\in E_{ij}$ and 
$f\in E_{ij'}$ where possibly $j=j'$. Note that $\phi(i)=u\in V_i$ is a common 
endpoint of both $e$ and $f$. If indeed $j=j'$, then $\phi(j)=w\in V_j$ 
is also a common endpoint of both $e$ and $f$, so that $e=f$, i.e., $v_e$ and $v_f$ 
are not distinct. Hence it must be that $j\neq j'$. But in this case, the 
construction of $G'$ implies that $e$ and $f$ do not share a vertex, which 
contradicts the fact that they have $u$ as a common endpoint.

For the third part of the lemma, we prove the contrapositive: we claim that if 
$G'$ has an independent set $S$ of size $k \geq \nu|E(J)|$, then there exists an 
assignment $\phi: V(J) \rightarrow V(G)$ satisfying at least $k$ edges in 
$\Gamma$. To see that, first observe that the set $S$ can contain at most one 
vertex from $V'_e$ as any two vertices in $V'_e$ are adjacent. Let $E_S := \{ e 
\in E(J) \mid S \cap V'_e \neq \emptyset\}$, for which we then have $|E_S| = 
|S|$. We claim that all the edges in $E_S$ can be satisfied by an assignment 
$\phi$ defined as follows. For $ij \in E_S$, let $S \cap V'_{ij} = \{v_{uw}\}$. 
Then we set $\phi(i)=u$ and $\phi(j)=w$. We need to show that the function 
$\phi$ is well-defined. Suppose some vertex $i \in V(J)$ gets mapped to more 
than one vertex of $V(G)$ by $\phi$. This must mean that there exist two edges 
in $G$ that contain one endpoint in $V_i$ and are in~$E_S$. But this would mean 
that the two vertices in $S$ corresponding to these two edges in $E_S$ are 
adjacent due to the construction of $G'$. This is a contradiction to~$S$ being 
an independent set. Also, $\phi(i)\phi(j)\in E(G)$ for all $ij \in E_S$, since 
for each $v_{uw}$ we have $uw \in E(G)$, and we have set $\phi(i) = u$ and 
$\phi(j)= w$. This concludes the proof. 
\end{proof}

Now we are ready to prove \cref{thm:k1dfree}.
\begin{proof}[Proof of \cref{thm:k1dfree}]
Assume there was a polynomial time algorithm $\mathcal{A}$ to approximate the 
\IS problem within a factor $\frac{1-\eps}{c\log(t)/t+\eps}$ for some $\eps>0$ 
in $K_{1,d}$-free graphs, where $t=\lfloor\frac{d}{2}-1\rfloor\geq t^\star$, and 
$c$ is the constant given by~\cref{thm:Bundit_polytime}. Given an instance 
$\Gamma = (G,V_1,\dots,V_\ell,J)$ of \MCSI{$t$} and $\eps$, we can reduce it to 
an instance of \IS in $K_{1,d}$-free graphs in polynomial time by using the 
reduction of \cref{lemma:K1dfreereduction}. Now, setting $\mu = 1 - \eps$ and 
$\nu = (c\log(t)/t)+\eps$ in the statement of~\cref{lemma:K1dfreereduction}, 
this gives that given an instance $\Gamma$ of \MCSI{$t$} and $\eps$, we can now 
use~$\mathcal{A}$ to differentiate between the YES- and NO-cases 
of~\cref{thm:Bundit_polytime} in polynomial time, which would mean that 
\ZPP~=~\NP. As $\frac{1-\eps}{c\log(t)/t+\eps}=\Oh(d/\log d)$, this implies 
\cref{thm:k1dfree}, where $d^\star$ is the constant for which 
$\lfloor\frac{d^\star}{2}-1\rfloor= t^\star$.
\end{proof}

\section{Parameterized Approximation for Fixed $H$} \label{sec:nofpas}

In this section we prove \cref{thm:noFPAS} and \cref{thm:NoConstantFptApprox}, that follows from \cref{thm:noFPAS} using a gap amplification.
Thus, we first prove \cref{thm:noFPAS}.
Let us define an auxiliary family of classes of graphs:
for integers $4 \leq a \leq b$ and $c \geq 3$, let $\cH([a,b],c)$ be a family of graph consists of $K_{1,c}$ and cycles $C_p$ for all $p \in [a,b]$.
Further, let $\cC([a,b],c)$ be a class of $\cH([a,b],c)$-free graphs.
Let ${\cal T}(b')$ be the class of trees with two vertices of degree at least 3 at distance at most $b'$.
Let $\cC^*([a,b],c) \subseteq \cC([a,b],c)$ be the set of those $G \in \cC([a,b],c)$, which are are also ${\cal T}(\lceil\frac{b - 1}{2}\rceil)$-free, i.e., $\cC^*([a,b],c)$ consists of $\cH([a,b],c) \cup {\cal T}(\lceil\frac{b - 1}{2}\rceil)$-free graphs.
Actually, we will prove the following theorem, which implies \cref{thm:noFPAS}.

\begin{restatable}{theorem}{noFPASclasses}
\label{thm:noFPAS-classes}
Let $z \geq 5$ be a constant.
The following lower bounds hold for the \IS problem on graphs $G \in 
\cC^*([4,z],5) \cup \cC([5,z],4)$ with $n$ vertices.
\begin{compactenum}
\item For any computable function $f$, there is no 
$f(k)\cdot n^{o(k/\log k)}$-time algorithm that determines if $\alpha(G) \geq k$, unless the ETH fails.
\item There exists a constant $\gamma > 
0$, such that for any computable function $f$, there is no $f(k)\cdot 
n^{\Oh(1)}$-time algorithm that can distinguish between the  two 
cases: $\alpha(G) \geq k$, or $\alpha(G) < (1-\gamma)\cdot k$, unless the 
deterministic Gap-ETH fails.
\label{it:NoFptApprox-classes}
\item There exists a constant $\gamma > 0$, 
such that for any computable function $f$, there is no $f(k)\cdot 
n^{o(\sqrt{k})}$-time algorithm that can distinguish between the  two 
cases: $\alpha(G) \geq k$, or $\alpha(G) < (1-\gamma)\cdot k$, unless the 
randomized Gap-ETH fails.
\end{compactenum}
\end{restatable}

The proof of \cref{thm:noFPAS-classes} consists of two steps: first we will 
prove it for graphs in $\cC^*([4,z],5)$, and then for graphs in $ \cC([5,z],4)$. In 
both proofs we will reduce from the \MCSI{$3$} problem.
Let $\Gamma = (G,V_1,\dots,V_\ell,J)$ be an instance of \MCSI{$3$}.
For $ij \in E(J)$, by $E_{ij}=E_{ji}$ we denote the set of edges between $V_i$ and $V_j$.
Note that we may assume that $J$ has no isolated vertices, each $V_i$ is an independent set, and $E_{ij}\neq\emptyset$ if and only if $ij \in E(J)$.

Lokshtanov et al.~\cite{Lokshtanov17} gave the following hardness result (the 
first statement actually follows from Marx~\cite{Marx10} and Marx, 
Pilipczuk~\cite{MarxP15}). We note that Lokshtanov et al.~\cite{Lokshtanov17} 
conditioned their result on the \emph{Parameterized Inapproximability 
Hypothesis} (PIH) and \Wone~$\neq$~FPT. Here we use stronger assumptions, i.e., 
the deterministic and 
randomized Gap-ETH, which are more standard in the area of 
parameterized approximation. The reduction in~\cite{Lokshtanov17} yields the 
following theorem, when starting from 
\cite{DBLP:conf/innovations/DinurM18,DBLP:journals/corr/abs-1805-03867} and 
\cite{DBLP:conf/soda/Manurangsi20}, respectively (see also 
\cite[Corollary~7.9]{chitnis2017parameterized}).

\begin{theorem}[Lokshtanov et al.~\cite{Lokshtanov17}]
\label{thm:MCSIHardness}
Consider an arbitrary instance $\Gamma = (G,V_1,\dots,V_\ell,J)$ of \MCSI{$3$} with size $n$.
\begin{compactenum}
\item Assuming the ETH, for any computable function $f$, there is no 
$f(\ell)\cdot n^{o(\ell/\log \ell)}$ time algorithm that solves $\Gamma$.
\item Assuming the deterministic Gap-ETH there exists a constant $\gamma > 
0$, such that for any computable function $f$, there is no $f(\ell)\cdot 
n^{\Oh(1)}$ time algorithm that can distinguish between the  two 
cases: \emph{(YES-case)} $\val(\Gamma) = 1$, and \emph{(NO-case)} $\val(\Gamma) 
< 1 - \gamma$.
\item Assuming the randomized Gap-ETH there exists a constant $\gamma > 0$, 
such that for any computable function $f$, there is no $f(\ell)\cdot 
n^{o(\sqrt{\ell})}$ time algorithm that can distinguish between the  two 
cases: \emph{(YES-case)} $\val(\Gamma) = 1$, and \emph{(NO-case)} $\val(\Gamma) 
< 1 - \gamma$. \label{it:xp-gap}
\end{compactenum}
\end{theorem}

\subsection{Hardness for $(C_4,C_5\ldots,C_z,K_{1,5},{\cal 
T}(\lceil\frac{z-1}{2}\rceil))$-free Graphs} \label{sec:hardness-k15}

First, let us show \cref{thm:noFPAS-classes} for $\cC^*([4,z],5)$, i.e., for  
$(C_4,C_5\ldots,C_z,K_{1,5}, {\cal T}(s))$-free graphs for $s = \lceil\frac{z-1}{2}\rceil$.
Let $\Gamma = (G,V_1,\dots,V_\ell,J)$ be an instance of \MCSI{3}.
We aim to build an instance $(G',k)$ of {\sc Independent Set}, such that the graph $G' \in \cC^*([4,z],5)$.

For each $ij \in E(J)$, we introduce a clique $C_{ij}$ of size $|E_{ij}|$,
whose every vertex \emph{represents} a different edge from $E_{ij}$. 
The cliques constructed at this step will be called \emph{primary cliques}, note that their number is $|E(J)|$.
Choosing a vertex $v$ from $C_{ij}$ to an independent set of $G'$ will correspond to mapping $i$ and $j$ to the appropriate endvertices of the edge from $E_{ij}$, corresponding to $v$.

Now we need to ensure that the choices in primary cliques corresponding to edges of $G$ are consistent.
Consider $i \in V(J)$ and suppose it has three neighbors $j_1,j_2,j_3$ (the cases if $i$ has fewer neighbors are dealt with analogously).
We will connect the cliques $C_{ij_1},C_{ij_2},C_{ij_3}$ using a gadget called a \emph{vertex-cycle}, whose construction we describe below.
For each $a \in \{1,2,3\}$, we introduce $s$ copies of $C_{ij_a}$ and denote them by $D^1_{ij_a},D^2_{ij_a},\ldots,D^s_{ij_a}$, respectively. 
Let us call these copies \emph{secondary cliques}.
The vertices of secondary cliques represent the edges from $E_{ij_a}$ analogously as the ones of $C_{ij_a}$. 
We call primary and secondary cliques as \emph{base cliques}.
We connect the base cliques corresponding to the vertex $i \in V(J)$ into \emph{vertex-cycle} ${\cal C}_i$.
Imagine that secondary cliques, along with primary cliques $C_{ij_1},C_{ij_2},C_{ij_3}$, are arranged in a cycle-like fashion, as follows:
\[
C_{ij_1},D_{ij_1}^1,D_{ij_1}^2,\ldots,D_{ij_1}^s,C_{ij_2},D_{ij_2}^1,D_{ij_2}^2,\ldots,D_{ij_2}^s,C_{ij_3},D_{ij_3}^1,D_{ij_3}^2,\ldots,D_{ij_3}^s,C_{ij_1}.
\]
This cyclic ordering of cliques constitutes the vertex-cycle, let us point out that we treat this cycle as a directed one.
As we describe below we put some edges between two base cliques $B_1$ and $B_2$ 
only if they belong to some vertex-cycle~${\cal C}_i$.
See Figure~\ref{fig:K_15_construction} for an example of how we connect base cliques.
\begin{figure}[b]
 \centering
 \includegraphics[scale=1]{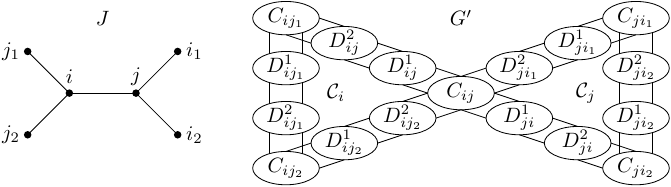}
 \caption{A part of the construction of $G'$ for $s = 2$. Cliques $C_{ab}$ 
representing edge sets $E_{ab} \subseteq E(G)$ are connected through secondary 
cliques $D^p_{ab}$.}
 \label{fig:K_15_construction}
\end{figure}

Now, we describe how we connect the consecutive cliques in ${\cal C}_i$.
Recall that each vertex $v$ of each clique represents exactly one edge $uw$ of $G$, whose exactly one vertex, say $u$, is in $V_i$. 
We extend the notion of representing and say that $v$ \emph{represents} $u$, and denote it by $\repvert{i}(v)=u$.

Let us fix an arbitrary ordering $\prec_i$ on $V_i$. 
Now, consider two consecutive cliques of the vertex-cycle. 
Let $v$ be a vertex of the first clique and $v'$ be a vertex from the second clique, and let $u$ and $u'$ be the vertices of $V_i$ represented by $v$ and $v'$, respectively. 
The edge $vv'$ exists in $G'$ if and only if $u \prec_i u'$.
See Figure~\ref{fig:half_graph_rel} how we connect two consecutive base cliques in a vertex-cycle.
This finishes the construction of ${\cal C}_i$.
\begin{figure}[b]
 \centering
 \includegraphics[scale=0.82]{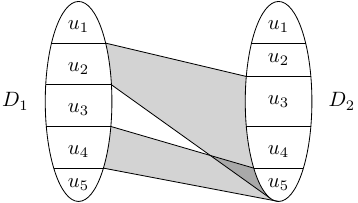}
 \caption{Example of edges between two consecutive cliques $B_1$ and $B_2$ in a vertex-cycle ${\cal C}_i$, where $V_i = \{u_1,\dots,u_5\}$.
 Each region marked with $u_b$ in $B_1$ and $B_2$ contains vertices corresponding to edges of $E_{ij} \subseteq E(G)$ incident to $u_b$.
 Thus, the vertices from the region $u_b$ in $B_1$ are connected to the vertices of regions $u_{b'}$ in $B_2$ for all $b' > b$.
 For simplicity, we depicted only edges incident to regions $b_2$ and $b_4$ in $B_1$ .
}
 \label{fig:half_graph_rel}
\end{figure}

We introduce a vertex-cycle ${\cal C}_i$ for every vertex $i$ of $J$, note that each primary clique $C_{ij}$ is in exactly two vertex-cycles: ${\cal C}_i$ and ${\cal C}_j$. 
The number of all base cliques is
\[
k:=\underbrace{|E(J)|}_{\substack{\text{primary}\\\text{cliques}}} + \underbrace{\sum\nolimits_{i \in V(J)} \deg_J(i) \cdot s }_{\text{secondary cliques}} = |E(J)| \cdot \Bigl(1 + \frac{s}{2}\Bigr) \leq \frac{3\ell}{2} \cdot \Bigl(1 + \frac{s}{2}\Bigr) = \Oh(\ell).
\]
This concludes the construction of $(G',k)$. 
Since $V(G')$ is partitioned into $k$ base cliques, $k$ is an upper bound on the size of any independent set in $G'$, and a solution of size $k$ contains exactly one vertex from each base clique.

We claim that the graph $G'$ is in the class $\cC^*([4,z],5)$. 
Moreover, if $\val(\Gamma) = 1$, then the graph $G'$ has an independent set of size $k$ and
if the graph $G'$ has an independent set of size at least $(1 - \gamma')\cdot k$ 
for $\gamma' = \frac{\gamma}{6 + 3s}$, then $\text{val}(\Gamma) \geq 1 
- \gamma$.
By Theorem~\ref{thm:MCSIHardness}, we conclude Theorem~\ref{thm:noFPAS-classes} holds for the class $\cC^*([4,z],5)$.

Now, we will prove our claims about $G'$.
For two distinct base cliques $B_1,B_2$, by $E(B_1,B_2)$ we denote the set of edges with one endvertex in $B_1$ and another in $B_2$. We say that $B_1,B_2$ are \emph{adjacent} if $E(B_1,B_2)\neq\emptyset$.

\begin{claimm}\label{lem:NoInducedMatching}
Let $B_1, B_2$ be two distinct base cliques in $G'$.
Then the size of a maximum induced matching in the graph induced by $E(B_1,B_2)$ is at most 1.
\end{claimm}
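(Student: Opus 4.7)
The plan is to first reduce the claim to the case where $D_1$ and $D_2$ are actually adjacent in $G'$. By construction, edges between different base cliques are introduced only between consecutive cliques of some vertex-cycle $\mathcal{C}_i$, so if $D_1,D_2$ are non-adjacent the claim is vacuous. Hence I would assume $D_1,D_2$ are consecutive in some vertex-cycle $\mathcal{C}_i$, say with $D_1$ preceding $D_2$ in the cyclic order. Then every vertex $v \in D_1 \cup D_2$ represents a well-defined vertex $\repvert{i}(v) \in V_i$, and the edges from $D_1$ to $D_2$ are precisely the pairs $vv'$ with $\repvert{i}(v) \prec_i \repvert{i}(v')$.

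The key structural observation is that this bipartite edge pattern is a \emph{half-graph} with respect to the linear order $\prec_i$, and half-graphs have no induced matching of size~$2$. I would formalise this by contradiction: assume there are two disjoint induced-matching edges $v_1v_1'$ and $v_2v_2'$ with $v_1,v_2 \in D_1$ and $v_1',v_2' \in D_2$, and set $u_j = \repvert{i}(v_j)$, $u_j' = \repvert{i}(v_j')$. The two edge-conditions give
\[
u_1 \prec_i u_1' \qquad \text{and} \qquad u_2 \prec_i u_2',
\]
while the two non-edge conditions (needed because the matching is \emph{induced}) give $u_2' \not\succ_i u_1$ and $u_1' \not\succ_i u_2$, i.e.\ $u_2' \preceq_i u_1$ and $u_1' \preceq_i u_2$. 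Chaining yields
\[
u_1 \prec_i u_1' \preceq_i u_2 \prec_i u_2' \preceq_i u_1,
\]
contradicting the fact that $\prec_i$ is a strict linear order on $V_i$.

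There is essentially no substantial obstacle here; the argument is just the standard fact about half-graphs. The only minor care needed is to justify that once $D_1,D_2$ are fixed and adjacent, the relevant $i$ and order $\prec_i$ are uniquely determined (so we really are reading the edges off one single half-graph rather than a union of several). This follows because secondary cliques belong to exactly one vertex-cycle, and two distinct primary cliques lying in a common vertex-cycle $\mathcal{C}_i$ are separated by the $s \geq 2$ secondary cliques between them, hence are not consecutive; thus the pair $(D_1,D_2)$ uniquely determines the vertex-cycle in which they are consecutive.
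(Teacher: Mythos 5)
Your proposal is correct and follows essentially the same route as the paper: both reduce to two consecutive cliques of a vertex-cycle and use the half-graph structure induced by the total order $\prec_i$ to show that for any two disjoint edges one of the two crossing pairs must also be an edge (you phrase this as a contradiction from assuming neither crossing edge exists, which is the same argument in contrapositive form). Your extra remark that the relevant vertex-cycle is uniquely determined is a small bit of care the paper omits, but it does not change the substance.
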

\begin{proof}
If $E(B_1,B_2)$ is empty, then the lemma holds trivially.
Consider two disjoint edges $e=v_1v_2$ and $e'=v'_1v'_2$ in $E(B_1,B_2)$, where $v_1,v'_1 \in B_1$ and $v_2, v'_2 \in B_2$.
We prove that there is an edge $f \in E(B_1,B_2)$ such that $f$ intersect both $e$ and $e'$.
  
By construction, $B_1$ and $B_2$ are consecutive cliques in a vertex-cycle ${\cal C}_i$ for some $i \in V(J)$.
Assume that $B_2$ is the successor of $B_1$ on this cycle. 
Recall that each $v \in \{v_1,v'_1,v_2,v'_2\}$ represents some vertex $\repvert{i}(v) \in V_i$. 
Since $v_1v_2,v'_1v'_2 \in E(G')$, we observe that $\repvert{i}(v_1) \prec_i \repvert{i}(v_2)$ and $\repvert{i}(v'_1) \prec_i \repvert{i}(v'_2)$. 
Thus, at least one of the following holds $\repvert{i}(v_1) \prec_i \repvert{i}(v'_2)$ or $\repvert{i}(v'_1) \prec_i \repvert{i}(v_2)$.
Therefore, at least one of the edges $v_1v'_2$ or $v'_1v_2$ exists in~$G'$.
\end{proof}

\begin{claimm}
\label{lem:Cycles4}
 The graph $G'$ is $(C_4,\dots,C_z)$-free.
\end{claimm}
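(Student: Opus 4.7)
The plan is to associate an induced cycle of $G'$ of length $p\in[4,z]$ with a simple cycle in a natural quotient graph whose girth exceeds $z$. Define $Q$ to be the graph whose vertices are the base cliques of $G'$ and whose edges connect two base cliques exactly when they are consecutive in some vertex-cycle ${\cal C}_k$, i.e.\ when $G'$ contains at least one edge between them. Note that each secondary clique lies in a unique vertex-cycle, so its $Q$-degree is $2$, while each primary clique $C_{ij}$ lies exactly in the two vertex-cycles ${\cal C}_i$ and ${\cal C}_j$.

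Suppose, for contradiction, that $G'$ contains an induced cycle $F$ with $|V(F)|=p\in[4,z]$. First I would show that each base clique meets $V(F)$ in at most two vertices, and if it contains two, they must be consecutive on $F$: three vertices of $F$ inside a single base clique would form an induced triangle, and two non-consecutive vertices inside a single base clique would form a chord. By the same chord argument, no two distinct maximal runs of consecutive vertices of $F$ sharing a base clique can lie in the same base clique. Merging such runs therefore produces a cyclic sequence $D_1,D_2,\dots,D_q$ of pairwise distinct base cliques with $q\le p\le 2q$, in which consecutive $D_i,D_{i+1}$ (cyclically) are $Q$-adjacent. This is a simple cycle of length $q$ in $Q$.

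The second step is to bound the girth of $Q$ from below by $2(s+1)$. Since each edge of $Q$ belongs to a unique vertex-cycle, I would partition the edges of any simple cycle $\sigma$ of $Q$ into maximal \emph{arcs}, each being a path contained in a single vertex-cycle ${\cal C}_k$ whose endpoints are primary cliques of ${\cal C}_k$; this uses that a switch between vertex-cycles can happen only at a base clique lying in at least two vertex-cycles, i.e.\ at a primary clique. Inside ${\cal C}_k$ any two distinct primary cliques are separated by at least $s$ consecutive secondary cliques, so each arc contributes at least $s+1$ edges to $\sigma$. If $\sigma$ consists of $m\ge 2$ arcs then $|\sigma|\ge m(s+1)\ge 2(s+1)$; otherwise $\sigma$ is entirely contained in a single ${\cal C}_k$, and since ${\cal C}_k$ is itself a cycle in $Q$, $\sigma$ must coincide with ${\cal C}_k$, whose length is $\deg_H(k)(s+1)\ge 2(s+1)$. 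Substituting $s=\lceil\frac{z-1}{2}\rceil$ yields $2(s+1)\ge z+1$, whence $p\ge q\ge z+1>z$, contradicting $p\le z$.

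The main obstacle is the girth estimate on $Q$: one must carefully justify that a simple cycle in $Q$ can switch between vertex-cycles only at primary cliques (using that secondary cliques have $Q$-degree $2$) and that each arc inside a vertex-cycle between two distinct primary cliques uses at least $s+1$ edges (using the spacing of primary cliques in a vertex-cycle). Both facts are immediate from the construction but require splitting into the single-arc and multi-arc cases. The projection of $F$ onto $Q$ is then essentially routine once one observes the chord rule from the second paragraph.
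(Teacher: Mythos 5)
Your reduction to a girth bound on the quotient graph $Q$ of base cliques correctly handles the case where the induced cycle meets at least three base cliques, and there it is essentially the paper's argument (the paper asserts, more tersely, that the cycle "cannot intersect more than two base cliques" because every vertex-cycle is longer than $z$; your arc decomposition makes this rigorous). But there is a genuine gap at $q=2$. When the induced cycle is a $C_4$ with $v_1,v_2$ in a base clique $D_1$ and $v_3,v_4$ in an adjacent base clique $D_2$, the cyclic sequence $D_1,D_2$ is \emph{not} a simple cycle of $Q$ --- it traverses the single edge $D_1D_2$ twice --- so no girth bound on $Q$ can exclude it; your own inequality $p\le 2q$ only pins this case down to $p=4$ without refuting it. This configuration is exactly an induced matching of size two between $D_1$ and $D_2$ (the crossing edges $v_2v_3$ and $v_4v_1$ are present while $v_1v_3$ and $v_2v_4$ are absent), and ruling it out requires the specific half-graph structure of the inter-clique edges: if $v_1w_1$ and $v_2w_2$ are disjoint edges between consecutive cliques of ${\cal C}_i$, then $\repvert{i}(v_1)\prec_i \repvert{i}(w_1)$ and $\repvert{i}(v_2)\prec_i \repvert{i}(w_2)$, hence $\repvert{i}(v_1)\prec_i \repvert{i}(w_2)$ or $\repvert{i}(v_2)\prec_i \repvert{i}(w_1)$, so one of $v_1w_2$, $v_2w_1$ is an edge and the matching is not induced. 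This is precisely the paper's \cref{lem:NoInducedMatching}, which your proof never invokes; for an arbitrary bipartite connection between two adjacent cliques the claim would simply be false, so this case cannot be dismissed as routine.

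A minor secondary point: your single-arc case uses that every vertex-cycle ${\cal C}_k$ has length at least $2(s+1)$, which needs $\deg_H(k)\ge 2$; the paper's proof makes the same implicit assumption, so this is not specific to your write-up, but it deserves a sentence (e.g., assuming without loss of generality that the \MCSI{3} instance has minimum degree $2$).
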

\begin{proof}
For contradiction, suppose that there exists an induced cycle $K$ in $G'$ with consecutive vertices $(v_1,v_2,\ldots,v_p)$, where $p \in [4,z]$.
Note that two consecutive vertices of $K$ might be in the same base clique, or two adjacent base cliques.
Furthermore, no non-consecutive vertices of $K$ may be in one base clique.

Note that each vertex-cycle in $G'$ has at least $2s+2 > z$ base cliques.
Moreover, if $K$ contains vertices of more than on vertex-cycle,
then it has to contains a vertices of at least 4 primary cliques. 
Thus, the the length of $K$ would be larger than $4s + 4 > z$.
Therefore, we conclude that $K$ cannot intersect more than two base cliques. 
It cannot intersect one base clique, as $p > 3$, so suppose that $K$ intersects exactly two base cliques $B_1$ and $B_2$. 
Observe that this means that $p=4$ and $v_1,v_2 \in B_1$, while $v_3,v_4 \in B_2$.
However, by \cref{lem:NoInducedMatching}, we observe that either $v_1$ and $v_3$, or $v_2$ and $v_4$,  are adjacent in $G'$, so $K$ is not induced.
\end{proof}

\begin{claimm}
\label{lem:Star5}
 The graph $G'$ is $K_{1,5}$-free.
\end{claimm}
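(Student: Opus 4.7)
The plan is to assume for contradiction that $G'$ contains an induced $K_{1,5}$ with center $v$ and leaves $u_1,\dots,u_5$. Let $D$ be the base clique containing $v$. Since the five leaves are pairwise non-adjacent while each base clique induces a clique in $G'$, at most one leaf can lie in any single base clique, and every leaf must lie either in $D$ or in a base clique adjacent to $D$ (i.e., consecutive to $D$ in some vertex-cycle).

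If $D$ is a secondary clique, then $D$ belongs to a unique vertex-cycle and is adjacent there only to its predecessor and successor, leaving at most $1+2=3$ base cliques available to host the leaves -- too few to fit five.

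The main case is when $D=C_{ij}$ is primary, and hence belongs to both ${\cal C}_i$ and ${\cal C}_j$. Then $D$ is adjacent to exactly four base cliques: the predecessor $D_i^-$ and successor $D_i^+$ of $D$ in ${\cal C}_i$, and analogously $D_j^-,D_j^+$ in ${\cal C}_j$. To accommodate five leaves, exactly one must lie in $D$ and one in each of these four. I plan to derive a contradiction from the linear order $\prec_i$: write $u=\repvert{i}(v)$, $u''=\repvert{i}(u_1)$ where $u_1\in D$ is the leaf in $D$, and $x_\pm=\repvert{i}(u_\pm)$ where $u_\pm\in D_i^\pm$ is the leaf there. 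The adjacencies $vu_+$ and $vu_-$ in $G'$ force $x_-\prec_i u\prec_i x_+$, whereas the non-adjacencies $u_1u_+$ and $u_1u_-$ force $x_+\preceq_i u''$ and $u''\preceq_i x_-$. Chaining yields $x_+\preceq_i x_-$, contradicting $x_-\prec_i x_+$.

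The main obstacle is this last step: recognising that the half-graph structure between consecutive base cliques (essentially the same phenomenon already exploited in \cref{lem:NoInducedMatching}) forbids a leaf of the star in $D$ together with leaves of the star in both $D_i^-$ and $D_i^+$. Everything else is a simple counting argument about how many base cliques can be incident to $D$ through vertex-cycle edges, which caps the star size at $1+4=5$ and is then ruled out by the ordering obstruction.
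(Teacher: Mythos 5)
Your proposal is correct and follows essentially the same route as the paper: count that the center's base clique is adjacent to at most four other base cliques, conclude one leaf must share the center's (necessarily primary) clique, and then derive a contradiction from the total order $\prec_i$ using the two leaves in the predecessor and successor cliques of one vertex-cycle. Your use of $\preceq_i$ for the non-adjacency constraints is in fact slightly more careful than the paper's wording, which states those inequalities as strict.
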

\begin{proof}
By contradiction suppose that the set  $\{v,v_1,v_2,v_3,v_4,v_5\} \subseteq V(G')$ induces a copy of $K_{1,5}$ in $G'$ with $v$ being the central vertex. 
Let $B$ be the base clique containing $v$.
Since each of $v_1,v_2,v_3,v_4,v_5$ must be in a different base clique and $B$ is adjacent to at most four other base cliques, we conclude that one of $v_a$'s, say $v_5$, belongs to $B$.
For $a \in [4]$, let $B_i$ be the base clique containing $u_a$. 
Furthermore, note that $B$ must be a primary clique, say $B=C_{ij}$,
since only those ones are adjacent to four base cliques. 
Therefore two of $B_a$'s, say $B_1$ and $B_2$, must belong to the vertex-cycle ${\cal C}_i$.
Let $B_1$ precede $B$, and $B_2$ succeed $B$ on this cycle.
Consider the vertices $\repvert{i}(v),\repvert{i}(v_1),\repvert{i}(v_2),\repvert{i}(v_5)$ and recall that since $v$ is adjacent to $v_1,v_2$, we have $\repvert{i}(v_1) \prec_i \repvert{i}(v) \prec_i \repvert{i}(v_2)$.
However, $v_5$ is non-adjacent to $v_1,v_2$, which means that $\repvert{i}(v_2) \prec_i \repvert{i}(v_5) \prec_i \repvert{i}(v_1)$, which is a contradiction, since $\prec_i$ is transitive.
\end{proof}

\begin{claimm}
\label{clm:Trees}
Let $T \in {\cal T}(s)$.
Then, the graph $G'$ is $T$-free.
\end{claimm}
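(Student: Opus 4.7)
My plan for \cref{clm:Trees} is to reduce to ruling out a specific induced ``double-broom'' subtree of $T$ and then perform a case analysis on how it can embed into $G'$. If $T$ is induced in $G'$ with branching vertices $x_1, x_2$ at $T$-distance $d \leq s$, let $y, y'$ (resp.\ $z, z'$) be two $T$-neighbors of $x_1$ (resp.\ $x_2$) off the $x_1$--$x_2$ path $P = (x_1 = p_0, p_1, \ldots, p_d = x_2)$. The subtree $T^\star$ spanned by $V(P) \cup \{y,y',z,z'\}$ is induced in $T$, hence in $G'$, so it suffices to rule out this $T^\star$ on $d+5$ vertices.

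I would first record three structural facts. Since $T^\star$ is triangle-free and every base clique is a clique in $G'$, each base clique contains at most two vertices of $T^\star$, and any two such vertices are $T^\star$-adjacent. Every edge of $G'$ lies either inside a base clique or between two consecutive base cliques of some vertex-cycle, so the base cliques of $T^\star$-adjacent vertices are equal or adjacent in the clique graph. A secondary clique has exactly two neighbors in the clique graph, while a primary clique has exactly four (two per incident vertex-cycle). Since the three $T^\star$-neighbors of each branching vertex $x_\iota$ are pairwise non-adjacent in $G'$, and each clique can contribute at most one, they occupy three distinct cliques drawn from $\{D(x_\iota)\} \cup \mathrm{Adj}(D(x_\iota))$.

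The proof then splits on the types of $D(x_1), D(x_2)$ (primary versus secondary) and on which of $p_1, y, y'$ lies inside $D(x_1)$. The prototypical subcase is $D(x_1)$ secondary with $p_1 \in D(x_1)$ and $y, y'$ placed in its two adjacent cliques $D_-, D_+$: non-adjacency of each $p_i$ ($i \geq 2$) with both $y$ and $y'$ forces $D(p_i) = D(x_1)$ by induction, so $x_2 \in D(x_1)$; then $z, z'$ must lie in $\{D_-, D_+\}$, but each of those cliques already contains $y$ or $y'$, and any other vertex of that clique is adjacent to it, a contradiction. In the complementary secondary subcase, $y \in D(x_1)$ and $p_1$ leaves $D(x_1)$ along the cycle; the walk along $P$ may now progress, but the non-adjacencies of each $p_i$ with $y$ and with the remaining leaf $y'$ constrain the clique-position of $p_i$ to advance monotonically. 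Since $d \leq s$ while the vertex-cycle has length at least $2s+2$ (as in \cref{lem:Cycles4}), $D(x_2)$ remains within a small neighborhood of $D(x_1)$, and the same collision between the $x_2$-side leaves $z, z'$ and the $x_1$-side leaves $y, y'$ --- now mediated by the half-graph condition of \cref{lem:NoInducedMatching} together with the chosen $\prec_i$-order --- produces the contradiction. The primary subcases of $D(x_1)$ are analogous, allowing $P$ to cross through $D(x_1)$ into a second vertex-cycle, but the same collision mechanism at $x_2$ applies.

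The main obstacle will be the bookkeeping across the many subcases: where $D(x_1)$ and $D(x_2)$ sit (primary versus secondary), how the walk $P$ navigates one or two vertex-cycles, and which $T^\star$-leaf occupies each available clique. The unifying principle throughout is that $d \leq s$ is too small relative to the $2s+2$ lower bound on vertex-cycle length for $D(x_2)$ to escape the neighborhood of $D(x_1)$ in the clique graph, leaving insufficient free cliques for three new $T^\star$-neighbors of $x_2$ that are simultaneously non-adjacent (in $G'$) to the ones already placed at $x_1$.
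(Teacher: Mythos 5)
There is a genuine gap: your plan misidentifies the mechanism of the proof and, as a consequence, leaves the bulk of the argument unexecuted. The actual contradiction is purely \emph{local} at a single branching vertex sitting in a \emph{secondary} clique: if $v$ lies in a secondary clique $D$ with cycle-predecessor $D_-$ and cycle-successor $D_+$, its three pairwise non-adjacent $T$-neighbors must occupy $D$, $D_-$, $D_+$ one each (as you correctly note), and then the half-graph orderings force $\repvert{i}$ of the neighbor in $D$ to be simultaneously $\preceq_i$ the value of the neighbor in $D_-$ and $\succeq_i$ the value of the neighbor in $D_+$, while adjacency of $v$ to those two gives the reverse chain through $\repvert{i}(v)$ --- a contradiction by transitivity of $\prec_i$, exactly as in \cref{lem:Star5} (not \cref{lem:NoInducedMatching}, which you cite). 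So \emph{every} one of your ``secondary'' subcases, including the ``complementary'' one where $y\in D(x_1)$, dies immediately; no tracing of the path $P$, no ``monotonic advancement,'' and no ``collision at $x_2$'' is needed or relevant. The only place the distance bound enters is to rule out the remaining configuration in which \emph{both} branching vertices lie in primary cliques: distinct primary cliques are separated by $s$ secondary cliques along every arc of the clique structure, hence are at $G'$-distance at least $s+1>s\geq \mathrm{dist}_T(x_1,x_2)\geq \mathrm{dist}_{G'}(x_1,x_2)$. Your stated ``unifying principle'' (that $d\leq s$ prevents $D(x_2)$ from escaping the neighborhood of $D(x_1)$, leaving too few cliques for $z,z'$) is not the reason the embedding fails, and the subcases where you invoke it are dismissed only with ``the same collision mechanism applies.''

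There is also a concrete error in the one subcase you do work out: from non-adjacency of $p_2$ with $y$ and $y'$ you conclude $D(p_2)=D(x_1)$, but that is impossible --- $p_2$ would then be adjacent to $x_1=p_0$ inside the clique $D(x_1)$, contradicting inducedness of the path. What the non-adjacencies actually show is that $p_2$ has \emph{no} admissible clique at all (it cannot be in $D(x_1)$, and it cannot join $y$ or $y'$ in $D_-$ or $D_+$), which is a contradiction, but not the one you claim and not one you can ``iterate'' to place $x_2$ in $D(x_1)$. In short: the structural facts you record are sound, but the proof as proposed would need to be reorganized around the local \cref{lem:Star5}-style argument before it could be completed.
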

\begin{proof}
Suppose that $G'$ contains $T$ as an induced subgraph.
Let $v,v' \in V(T)$ such that $\deg_T(v), \deg_T(v') \geq 3$ and $\text{dist}_T(v,v') \leq s$.
Note that any two primary cliques are at distance at least $s + 1$.
Thus, $v$ and $v'$ can not be both in primary cliques.
Without loss of generality, let $v$ be in a secondary clique $D$ of a vertex-cycle ${\cal C}_i$.
There are only two base cliques $B_1$ and $B_2$ adjacent to the secondary clique $D$.
Let $v_1,v_2$ and $v_3$ be distinct neighbors of $v$ in $T$.
Since $v_1,v_2$, and $v_3$ form an independent set in $T$, they have to be in distinct base cliques in $G$.
Thus, we can suppose $v_1 \in V(B_1), v_2 \in V(B_2)$ and $v_3 \in V(D)$.
However, by the same argument as in proof of~\cref{lem:Star5} these four vertices $v,v_1,v_2$, and $v_3$ cannot exist.
\end{proof}

\begin{claimm}
\label{thm:Reduction1}
If $\val(\Gamma) = 1$, then the graph $G'$ has an independent set of size $k$.
\end{claimm}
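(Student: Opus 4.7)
The plan is as follows. Since $\val(\Gamma) = 1$, fix a satisfying assignment $\phi:V(H)\to V(G)$ with $\phi(i)\in V_i$ and $\phi(i)\phi(j)\in E(G)$ for every edge $ij\in E(H)$. I want to exhibit an independent set $I\subseteq V(G')$ of size $k$ by selecting exactly one vertex from each base clique (recall that $V(G')$ is partitioned into $k$ base cliques, so any independent set of size $k$ must have this form, and this is the only combinatorial structure we need to respect).

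For each primary clique $C_{ij}$, the edge $\phi(i)\phi(j)\in E_{ij}$ is represented by a unique vertex $v\in C_{ij}$; put $v$ into $I$. For each secondary clique $D^{p}_{ij}$, which by construction is a copy of $C_{ij}$ whose vertices represent the same edges of $E_{ij}$, do the same: put into $I$ the unique vertex representing $\phi(i)\phi(j)$. This gives exactly $|E(H)| + \sum_{i\in V(H)} \deg_H(i)\cdot s = k$ selected vertices, one per base clique.

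The main step is to verify that $I$ is independent. No two vertices of $I$ lie in the same base clique by construction, so we need only inspect edges between distinct base cliques. By the construction, such edges appear only between two base cliques that are \emph{consecutive} on some vertex-cycle ${\cal C}_i$. Let $D_1, D_2$ be two consecutive base cliques on ${\cal C}_i$, and let $v\in D_1\cap I$, $v'\in D_2\cap I$. The key observation is that every base clique on ${\cal C}_i$ is either $C_{ij_a}$ or $D^{p}_{ij_a}$ for some neighbor $j_a$ of $i$ in $H$, and in each case the vertex we selected represents the edge $\phi(i)\phi(j_a)\in E_{ij_a}$, whose endpoint in $V_i$ is $\phi(i)$. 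Hence $\repvert{i}(v) = \repvert{i}(v') = \phi(i)$. By the rule defining edges between consecutive cliques, $vv'\in E(G')$ would require $\repvert{i}(v)\prec_i \repvert{i}(v')$, which is impossible since $\prec_i$ is a strict order and both representatives coincide. Thus $v$ and $v'$ are non-adjacent, and $I$ is an independent set of size $k$.

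I do not anticipate a serious obstacle here; the proof is essentially a consistency check that the selection induced by $\phi$ avoids the edges introduced by the half-graph-style connections within each vertex-cycle. The only thing to be careful about is to notice that all edges of $G'$ live within base cliques or between consecutive cliques of some ${\cal C}_i$ (so no cross-vertex-cycle edges need to be considered) and that the strictness of $\prec_i$ is what rules out the crucial edge $vv'$.
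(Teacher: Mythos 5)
Your proposal is correct and follows essentially the same route as the paper: select in each base clique the (copy of the) vertex representing $\phi(i)\phi(j)$, and rule out edges between consecutive cliques of a vertex-cycle because both selected vertices have the same representative $\phi(i)$ in $V_i$ while the connection rule requires a strict $\prec_i$-inequality. The only cosmetic difference is that you handle in one stroke the two cases the paper separates (consecutive copies of the same primary clique versus the junction $D^s_{ij_1}$, $C_{ij_2}$), which is fine since both reduce to the same strictness argument.
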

\begin{proof}
Let $\phi$ be a solution of $\Gamma$ of value 1, i.e., for each $ij \in E(J)$ holds that $\phi(i)\phi(j)$ is an edge of $G$.
We will find an independent set $I$ in $G'$ of size $k$.
For each $ij \in E(J)$ we add to the set $I$ the vertex from the primary clique $C_{ij}$ which represents  the edge $\phi(i)\phi(j)$.
Thus, we pick one vertex from each primary clique.
Recall that each secondary clique $D$ is a copy of some primary clique $C$.
If we pick a vertex $v$ from $C$, then we add to $I$ also a copy of $v$ from $D$.
Thus, we add one vertex from each base clique to the set $I$ and therefore $|I| = k$.
 
We claim that $I$ is independent.
Suppose there exist $v,v' \in I$ such that $vv' \in E(G')$.
Let $v \in V(B_1)$ and $v' \in V(B_2)$ for some base cliques $B_1$ and $B_2$.
First, suppose that $B_1$ and $B_2$ are copies of the same primary clique $C_{ij}$ (or one of them is the primary clique itself and the second one is the copy)\footnote{The possibilities for $\{B_1,B_2\}$ are: $\{C_{ij}, D^1_{ij}\}$ or $\{D^p_{ij}, D^{p+1}_{ij}\}$ for $p < s$.}.
Thus, the vertices $v$ and $v'$ represent the same edge in $E_{ij}$ and by construction,
vertices in primary and secondary cliques representing the same edge in $E_{ij}$ are not adjacent.
 
Therefore $B_1 = D^s_{ij_1}$ and $B_2 = C_{ij_2}$ (or vice versa) for some edges $ij_1$ and  $ij_2$ in $E(J)$.
Edges between $B_1$ and $B_2$ were added according to the ordering $\prec_i$ of vertices in $V_i$.
Note that the vertices $v$ and $v'$ represent edges $\phi(i)\phi(j_1)$ and $\phi(i)\phi(j_2)$.
Thus, $\repvert{i}(v) = \phi(i) = \repvert{i}(w)$.
Since $v$ and $v'$ are adjacent in $G'$, it holds that $\repvert{i}(v) \prec_i \repvert{i}(v')$ by construction,
which is a contradiction with $\repvert{i}(v) = \repvert{i}(v')$.
Therefore, $I$ is an independent set.
\end{proof}

\begin{claimm}
\label{thm:Reduction2}
Let $\gamma > 0$.
If the graph $G'$ has an independent set of size at least $(1 - \gamma')\cdot k$ for $\gamma' = \frac{\gamma}{6 + 3s}$, then $\text{val}(\Gamma) \geq 1 - \gamma$.
\end{claimm}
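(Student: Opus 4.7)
The plan is to prove the contrapositive direction, dual to \cref{thm:Reduction1}: starting from an independent set $I$ in $G'$ of size at least $(1-\gamma')k$, we will construct an assignment $\phi$ for $\Gamma$ of value at least $1-\gamma$. Since every base clique contributes at most one vertex to $I$, the set $U$ of base cliques not met by $I$ has size at most $\gamma' k$. The whole idea is that a vertex-cycle ${\cal C}_i$ that is completely covered by $I$ must encode a single consistent label in $V_i$, and $\phi(i)$ will be read off from this label.

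The key structural step is the following: call $i \in V(H)$ \emph{clean} if every base clique of ${\cal C}_i$ is covered by $I$; then for any clean $i$, all vertices of $I$ lying on ${\cal C}_i$ represent the same element of $V_i$. I would prove this by a cyclic chain argument. For two consecutive base cliques $D_a, D_b$ of ${\cal C}_i$ (with $D_a$ the predecessor in the directed cyclic ordering), the unique vertices $v_a \in I \cap D_a$ and $v_b \in I \cap D_b$ are non-adjacent in $G'$, so by construction $\repvert{i}(v_a) \not\prec_i \repvert{i}(v_b)$, i.e., $\repvert{i}(v_b) \preceq_i \repvert{i}(v_a)$ in the total order $\prec_i$ on $V_i$. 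Chaining this relation all the way around ${\cal C}_i$ produces a cyclic sequence of $\preceq_i$-inequalities, which by strictness and totality of $\prec_i$ forces all representatives to coincide; denote their common value by $\phi^*(i) \in V_i$.

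Having established this, I would set $\phi(i) := \phi^*(i)$ if $i$ is clean, and let $\phi(i)$ be arbitrary otherwise. For an edge $ij \in E(H)$ with both endpoints clean, the primary clique $C_{ij}$ lies in both ${\cal C}_i$ and ${\cal C}_j$, and its unique $I$-vertex $v$ represents some edge $uw \in E(G)$ with $u \in V_i$ and $w \in V_j$; the structural lemma yields $\phi(i) = u$ and $\phi(j) = w$, so $ij$ is satisfied. Consequently only edges with at least one non-clean endpoint can fail to be satisfied by $\phi$.

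For the final count, note that every uncovered base clique lies in at most two vertex-cycles (primary cliques sit in two, secondary ones in a single one), so the number of non-clean vertices is at most $2|U| \leq 2\gamma' k$. Since $H$ has maximum degree three, there are at most $6\gamma' k$ edges of $H$ incident to a non-clean vertex, hence at most that many unsatisfied edges. Dividing by $|E(H)|$ and combining with $k = |E(H)|(1 + s/2)$ together with the choice $\gamma' = \gamma/(6+3s)$ yields $\val(\phi) \geq 1 - \gamma$, as required. The delicate step is the cyclic chain lemma that pins down $\phi^*(i)$ on a clean vertex-cycle; the rest is routine bookkeeping via $U$ and the maximum degree bound on $H$.
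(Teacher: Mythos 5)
Your proposal is correct and follows essentially the same route as the paper's proof: the cyclic chain of non-adjacency constraints forcing a single representative on each fully covered vertex-cycle, the arbitrary assignment on the rest, and the count bounding non-clean vertices by $2\gamma' k$ (each missed clique lying in at most two vertex-cycles) and unsatisfied edges by $6\gamma' k$ via the degree bound. No gaps.
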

\begin{proof}
 Let 
\begin{compactitem}
\item $I$ be a maximum independent set of $G'$ of size at least $(1 - \gamma')\cdot k$,
\item $i$ be a vertex of $J$, and suppose its degree is 3 (the case of vertices of smaller degree is treated analogously),
\item $j_1,j_2,j_3$ be the neighbors of $i$ in $J$,
\item $I_i$ be an intersection of $I$ and vertices of cliques in ${\cal C}_i$.  
 \end{compactitem}
Suppose that $|I_i| = 3s + 3$, i.e., $I$ intersects each clique in ${\cal C}_i$.
Let $v_1,v_2,v_3$ be vertices of intersections of $I$ and $C_{ij_1}$, $C_{ij_2}$, and $C_{ij_3}$, respectively. 
We claim that $\repvert{i}(v_1) =  \repvert{i}(v_2) = \repvert{i}(v_3)$.

Denote the consecutive cliques of ${\cal C}_i$ by $B_1,B_2,\ldots,B_{3s+3}$.
Recall that two cliques in ${\cal C}_i$ are adjacent if and only if they are consecutive. 
 For $p \in [3s+3]$ let $v'_p$ be the unique vertex in $I \cap V(B_p)$. 
Define a relation $\succeq_i$ on $V_i$, such that $v \succeq_i v'$ iff $v \not\prec_i v'$.
Since $\prec_i$ is a total order on $V_i$, we have that $v \succeq_i v'$ iff $v' \prec_i v$ or $v=v'$.
Since $v'_1,\dots,v'_{3s+3}$ are pairwise nonadjacent, it holds that $\repvert{i}(v'_1) \succeq_i \repvert{i}(v'_2) \succeq_i \dots \succeq_i  \repvert{i}(v'_{3s+3}) \succeq_i \repvert{i}(v'_1)$ by construction.
This implies that all vertices $v'_p$ represent the same vertex $u \in V_i$, in particular, $r_i(v_1) = r_i(v_2) = r_i(v_3) = u$.

Now, if $|I_i| = 3s + 3$, we define $\phi(i) = u$ (where $u$ is as in the previous paragraph).
If $|I_i| < 3s + 3$ we define $\phi(i)$ arbitrarily. 
Vertices $i' \in V(J)$ of degree 2 are processed similarly, however the size of $I_{i'}$ is compared to value $2s + 2$.
We say that the set $I_i$ is complete if $|I_i| = (s + 1)\cdot \deg(i)$.
Thus, if $I_i$ and $I_j$ are complete, then $\phi(i)\phi(j)$ is an edge of $G$.
 
Let $Q \subseteq V(J)$ be a set of vertices $i$ of $J$ such that $I_i$ is not complete.
Note that a primary clique $C_{ij}$ is in two vertex-cycles of base cliques ${\cal C}_i$ and ${\cal C}_j$ and each secondary clique is in exactly one vertex-cycle of base cliques.
Since there are fewer than $\gamma'\cdot k$ base cliques $B$ such that $I \cap 
B = \emptyset$, the set $Q$ has size less than $2\gamma'\cdot k$.
The vertices in $Q$ are incident to at most $6\gamma'\cdot k$ edges in $J$, and 
all remaining edges of $J$ are satisfied by $\phi$.
 Therefore, 
 \[
  \text{val}(\Gamma) \geq \frac{|E(J)| - 6\gamma' \cdot k}{|E(J)|} = 1 - 
6\gamma' \cdot \Bigl(1 + \frac{s}{2}\Bigr) = 1 - \gamma.\qedhere
 \]
\end{proof}

This completes the proof of \cref{thm:noFPAS-classes} in this case.

\subsection{Hardness for $(C_5\ldots,C_z,K_{1,4})$-free Graphs}

In this section we show \cref{thm:noFPAS-classes} for $\cC([5,z],4)$, i.e., for $(C_5\ldots,C_z,K_{1,4})$-free graphs. 
The proof is similar to the case of $\cC^*([4,z],5)$.
Let $\Gamma = (G,V_1,\dots,V_\ell,J)$ be an instance of \MCSI{3}, we will create an instance $(G',k)$ of \IS, where $G' \in \cC(5,z,4)$.
Consider an edge $ij$ of $J$.
We introduce four \emph{primary cliques} $C^1_{ij},C^2_{ij},C^3_{ij},C^4_{ij}$, each of size $|E_{ij}|$. For each $q \in [4]$, each vertex $v$ of $C^q_{ij}$ \emph{represents} one edge in $E_{ij}$, denote this edge by $\repedge(v)$.

For each $q \in [4]$, we create $s:=\lceil (z-3)/4\rceil$ copies of $C^q_{ij}$, denoted by $D^{q,1}_{ij},\ldots,D^{q,s}_{ij}$.
Each vertex of a copy represents the same edge as the corresponding vertex in $C^q_{ij}$.
The cliques created in this step will be called \emph{cycle cliques}. Again, we imagine that the primary and cycle cliques are arranged in a cyclic way and constitute the \emph{edge-cycle} corresponding to $ij$:
\[
C^1_{ij},D^{1,1}_{ij},\ldots,D^{1,s}_{ij},C^2_{ij},D^{2,1}_{ij},\ldots,D^{2,s}_{ij},C^3_{ij},D^{3,1}_{ij},\ldots,D^{3,s}_{ij},C^4_{ij},D^{4,1}_{ij},\ldots,D^{4,s}_{ij},C^1_{ij}.
\]
Note that all cliques in the edge-cycle are identical. We fix some arbitrary ordering $\prec_{ij}$ on $E_{ij}$, For each two consecutive cliques $B_1$ and $B_2$ of the edge-cycle, where $B_1$ precedes $B_2$, and for any vertex $v_1$ from $B_1$ and any vertex $v_2$ from $B_2$, we make $v_1v_2$ adjacent in $G'$ if and only if $\repedge(v_1) \prec_{ij} \repedge(v_2)$.

After repeating the previous step for every edge $ij$ of $J$, we arrive at the point that $G'$ consists of separate edge-cycles, one for each edge of $J$. Since $J$ has maximum degree 3, each edge of $J$ intersects at most 4 other edges.
So for each pair of intersecting edges $ij$ and $ij'$ we can assign a pair of primary cliques, one in the edge-cycle corresponding to $ij$, and the other one in the edge-cycle corresponding to $ij'$, so that no primary clique is assigned twice.

Consider two edges of $J$, that share a vertex, say edges $ij$ and $ij'$, and suppose the primary cliques chosen in the last step are $C^p_{ij}$ and $C^q_{ij'}$.
We need to provide some connection between these cliques, to make the choices for edges $ij$ and $ij'$ consistent.
Let us arbitrarily choose one of cliques $C^p_{ij}$ and $C^q_{ij'}$, say $C^p_{ij}$, and create $s$ copies of it, denote these cliques by $F^1_{ijj'},F^2_{ijj'},\ldots,F^s_{ijj'}$ (again, the represented edges are inherited from the primary clique). We call these cliques \emph{equality cliques}. We build an \emph{equality gadget} by arranging these cliques in a sequence as follows:
\[
C^p_{ij},F^1_{ijj'},F^2_{ijj'},\ldots,F^s_{ijj'},C^q_{ij'}.
\]
Consider two consecutive cliques $B_1$ and $B_2$ of this sequence, except for the last pair.
These cliques are identical. Between them we add edges that form an antimatching, i.e., for a vertex $v_1$ of $B_1$ and a vertex $v_2$ of $B_2$, we add an edge $v_1v_2$ if and only if $\repedge(v_1) \neq \repedge(v_2)$.
Finally, for a vertex $v_1$ of $F^s_{ijj'}$ and a vertex $v_2$ of $C^q_{ij}$, we add an edge $v_1v_2$ if and only if $\repedge(v_1) \cap \repedge(v_2) \neq \emptyset$, i.e., edges represented by these vertices contain different vertices from $V_i$.

This completes the construction of $G'$. By \emph{base cliques} we mean primary cliques, cycle cliques, and equality cliques.
Let $k$ be the number of all base cliques, i.e.,
\[
k:=\underbrace{4|E(J)|}_{\substack{\text{primary}\\\text{cliques}}} + \underbrace{4s|E(J)|}_{\substack{\text{cycle}\\\text{cliques}}} + \underbrace{\sum_{i \in V(J)} \binom{\deg_J(i)}{2} \cdot s}_{\text{equality cliques}} = \Oh(\ell).
\]
Let us upper-bound $k$. If $\ell_2$ and $\ell_3$ are, respectively, the numbers of vertices of $J$ with degree 2 and 3, then we obtain
\begin{equation}\label{eq:k-upperbound}
k = 4|E(J)|(s+1) + s(\ell_2 + 3\ell_3) \leq \frac{9s}{2} \cdot |E(J)| + 4 \leq 5s \cdot |E(J)|.
\end{equation}

The following claim is proven in an analogous way to \cref{lem:Cycles4}, note 
that this time we might obtain induced copies of $C_4$, where two vertices are 
in an equality clique, and the other two are in a different base clique in the 
same equality gadget (either an equality clique or a primary clique).

\begin{claimm}
\label{lem:Cycles5}
 The graph $G'$ is $(C_5,\dots,C_z)$-free.
\end{claimm}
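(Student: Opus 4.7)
My plan is to follow the outline of the proof of \cref{lem:Cycles4}, with the additional care demanded by the richer macro structure of the current construction. I would assume for a contradiction that $G'$ contains an induced cycle $K=(v_1,\dots,v_p)$ with $p\in[5,z]$. As in the earlier proof, any base clique meets $K$ in at most two vertices, and when two such vertices share a clique they must be consecutive in $K$; otherwise the edge inside the clique would chord $K$. Collapsing maximal runs of the same clique in the cyclic sequence of cliques traversed by $K$ therefore produces a cyclic sequence $D_{i_1},\dots,D_{i_m}$ of $m$ pairwise distinct base cliques, with $\lceil p/2\rceil\le m\le p\le z$, in which consecutive terms are adjacent in $G'$. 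In other words, the cliques used by $K$ span a cycle subgraph $C_m$ in the adjacency graph of base cliques, which I will call the \emph{macro graph}.

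The heart of the argument will then be to show that the macro graph has no cycle subgraph of length at most $z$. Structurally, this graph consists of $|E(H)|$ edge-cycles, each an induced $C_{4(s+1)}$, together with one equality gadget for every pair of intersecting edges of $H$; every gadget is a chordless path of $s+2$ cliques (i.e.\ $s+1$ edges) whose two endpoints are primary cliques in the two corresponding edge-cycles. By construction no primary clique lies in more than one gadget, and between any two edge-cycles there is at most one gadget. Crucially, each equality clique has degree exactly $2$ in the macro graph, namely its two gadget-neighbors, so whenever $C_m$ uses an equality clique it must also use both of them; by induction, as soon as $C_m$ touches any gadget it traverses that gadget end-to-end, contributing $s+1$ edges.

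Hence $C_m$ is either confined to a single edge-cycle, or a cyclic alternation of complete gadget traversals and \emph{segments}, where each segment is a path in some edge-cycle joining two of its primary cliques. In the first case the only cycle subgraph of the simple cycle $C_{4(s+1)}$ is the whole cycle, so $m=4(s+1)\ge z+1>z$. In the second case, any two primary cliques of a single edge-cycle are at distance at least $s+1$ in that edge-cycle, and a primary clique cannot play the role of a zero-length segment between two gadgets because it belongs to at most one gadget; so each segment contributes at least $s+1$ edges to $C_m$. Contracting each edge-cycle to a single vertex and keeping the gadgets as edges yields a simple super-graph; the trace of $C_m$ in this super-graph is a closed trail with pairwise distinct edges and, being in a simple graph, it has length at least $3$. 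Therefore $C_m$ uses at least three gadgets and three segments, which gives $m\ge 6(s+1)\ge \tfrac{3}{2}(z+1)>z$. Both alternatives contradict $m\le z$.

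The main obstacle I anticipate is the bookkeeping in the second case: verifying that gadget traversals and segments alternate in equal numbers around $C_m$, and that even when a single edge-cycle contributes several segments to $C_m$ each of them still spans at least $s+1$ edges. Once these routine points are in place, the bound $m\ge 6(s+1)$ follows immediately and contradicts $m\le z$, completing \cref{lem:Cycles5}.
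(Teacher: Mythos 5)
Your proof is correct and takes essentially the same route as the paper, which simply declares this claim ``analogous'' to the proof of the corresponding claim for the first construction: both arguments reduce a short induced cycle in $G'$ to a short cycle in the adjacency graph of base cliques and then rule the latter out by a girth computation. Your case analysis for macro-cycles that traverse equality gadgets (at least three full gadget traversals alternating with at least three edge-cycle segments, each contributing at least $s+1$ edges, versus a single full edge-cycle of length $4(s+1)>z$) is precisely the detail the paper leaves implicit, and it checks out.
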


The next claim is in turn analogous to  \cref{lem:Star5}.

\begin{claimm}
\label{lem:Star4}
 The graph $G'$ is $K_{1,4}$-free.
\end{claimm}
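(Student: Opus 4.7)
The plan is to mimic the argument of \cref{lem:Star5}, exploiting the fact that every base clique of $G'$ is adjacent to only a handful of other base cliques. Since each base clique is a clique in $G'$, two vertices lying in the same base clique are always adjacent, so an induced $K_{1,4}$ contains at most one leaf per base clique; hence, if its center $v$ lies in a base clique $D$, then $D$ together with the other base cliques adjacent to $D$ must account for at least four distinct base cliques.

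I would then split into cases according to the type of $D$. Cycle cliques $D^{q,r}_{ij}$ occur only inside the edge-cycle of $ij$, touching exactly two other base cliques (their predecessor and successor on that cycle). Equality cliques $F^r_{iab}$ occur only in one equality gadget and likewise touch exactly two other base cliques. In both subcases, $v$ has neighbors spread over at most three base cliques in total, so there cannot be four pairwise non-adjacent leaves and $K_{1,4}$ is ruled out at once.

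The main case is when $D$ is a primary clique $C^p_{ij}$. Such $D$ sits inside the edge-cycle of $ij$ (contributing two adjacent cycle cliques $D^{p-1,s}_{ij}$ and $D^{p,1}_{ij}$, with indices taken modulo $4$) and, by construction, is assigned to at most one equality gadget (contributing one further adjacent base clique). Hence $v$ has neighbors in at most four base cliques and, since each such clique hosts at most one leaf, the distribution of leaves is forced: exactly one leaf $u_0 \in C^p_{ij}$, one leaf $u_1 \in D^{p-1,s}_{ij}$, and one leaf $u_2 \in D^{p,1}_{ij}$ (plus a possible fourth leaf in the equality gadget, which I will not need). The edge $vu_1$ forces $\repedge(u_1) \prec_{ij} \repedge(v)$ and the edge $vu_2$ forces $\repedge(v) \prec_{ij} \repedge(u_2)$, while the non-adjacencies $u_0 \not\sim u_1$ and $u_0 \not\sim u_2$ give, respectively, $\repedge(u_0) \preceq_{ij} \repedge(u_1)$ and $\repedge(u_2) \preceq_{ij} \repedge(u_0)$. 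Chaining these four relations yields $\repedge(u_2) \prec_{ij} \repedge(u_2)$, contradicting antisymmetry of $\prec_{ij}$.

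The main obstacle is essentially bookkeeping, since a primary clique can simultaneously lie at a junction of an edge-cycle and at one end of an equality gadget (either the left or the right end), and the adjacency rules used inside an equality gadget differ from those used along an edge-cycle. Fortunately, the cyclic ordering contradiction above invokes only three of the four leaves, all lying inside the edge-cycle, so the more delicate adjacency rules of the equality gadgets never have to enter the chain of inequalities.
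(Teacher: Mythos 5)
Your proposal is correct and follows essentially the same route as the paper's proof: count the base cliques adjacent to each type of clique to force the center (and one leaf) into a primary clique with two further leaves in its two edge-cycle neighbours, then derive the contradiction from the cyclic chain of $\prec_{ij}$-relations exactly as in \cref{lem:Star5}. The only difference is that you spell out the ordering contradiction explicitly where the paper just points back to the earlier claim.
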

\begin{proof}
Observe that each clique is adjacent to at most three other cliques, and the 
only cliques adjacent to three other cliques are primary cliques. So if we hope 
to find an induced $K_{1,4}$, the center and one leaf must be in a primary 
clique, say $C^q_{ij}$, and other three leaves are in distinct base cliques 
adjacent to $C^q_{ij}$. However, two of cliques adjacent to $C^q_{ij}$ must 
belong to the same edge-cycle (and the third one is an equality clique). 
Similarly as in the proof of \cref{lem:Star5}, we observe that the leaf that 
belongs to $C^q_{ij}$ must be adjacent to at least one of the remaining leaves.
\end{proof}

The following claims are analogous to the corresponding claims in \cref{sec:hardness-k15}.
Therefore we provide only sketches of proofs.

\begin{claimm}
\label{thm:ReductionK14-1}
 If $\val(\Gamma) = 1$, then the graph $G'$ has an independent set of size $k$.
\end{claimm}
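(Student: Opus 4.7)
The plan is to mimic the construction in the proof of \cref{thm:Reduction1}: given a solution $\phi$ with $\val(\phi)=1$, for every $ij \in E(H)$ let $e_{ij} := \phi(i)\phi(j)$ be the (by assumption existing) edge of $G$ lying in $E_{ij}$. I then select one vertex from each base clique as follows. In each primary clique $C^q_{ij}$ and in each cycle clique $D^{q,r}_{ij}$ (which is a copy of $C^q_{ij}$ whose vertices represent the same edges of $E_{ij}$), I pick the unique vertex representing $e_{ij}$. In each equality clique $F^r_{iab}$, which by construction is a copy of some primary clique $C^p_{ia}$, I pick the vertex representing $e_{ia}$. Let $I$ be the resulting set; since exactly one vertex is chosen from each of the $k$ base cliques, $|I|=k$.

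It remains to verify that $I$ is independent. Edges of $G'$ only occur between vertices of the same base clique (impossible here, as only one vertex per clique is selected) or between vertices in two distinct base cliques that are adjacent in the construction. These adjacent pairs of base cliques are of exactly three types, which I would handle in turn:
\begin{compactenum}
\item \emph{Consecutive cliques in an edge-cycle of $ij$.} Both selected vertices represent $e_{ij}$, so $\repedge(v)=\repedge(w)$, and since $\prec_{ij}$ is a strict total order, the rule \mbox{``$vw \in E(G')$ iff $\repedge(v)\prec_{ij}\repedge(w)$''} produces no edge.
\item \emph{Consecutive cliques of an equality gadget, other than the last pair.} Both are copies of the same $C^p_{ia}$, and the selected vertices both represent $e_{ia}$, so by the antimatching rule (edge iff $\repedge(v)\neq\repedge(w)$) they are non-adjacent.
\item \emph{The last pair $F^s_{iab}$ and $C^q_{ib}$.} Here the selected vertex of $F^s_{iab}$ represents $e_{ia} = \phi(i)\phi(a)$ and the selected vertex of $C^q_{ib}$ represents $e_{ib} = \phi(i)\phi(b)$. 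These two edges share the endpoint $\phi(i)\in V_i$, so they agree on $V_i$; hence the consistency rule between the final two cliques of the equality gadget (non-adjacent when the represented edges agree on $V_i$) leaves them non-adjacent.
\end{compactenum}

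The only non-trivial point is case (3), where one must be careful to match the adjacency rule between $F^s_{iab}$ and $C^q_{ib}$ with the fact that a globally consistent assignment $\phi$ causes $e_{ia}$ and $e_{ib}$ to share exactly the endpoint $\phi(i)$; this is where the equality gadget actually delivers the ``equality'' guarantee. All other adjacencies rely only on strict orders or antimatchings and vanish automatically whenever the two selected vertices represent identical edges of $G$, which holds inside each single edge-cycle and inside each equality gadget by construction. Since no other pairs of base cliques are adjacent in $G'$, $I$ is an independent set of size $k$, as claimed.
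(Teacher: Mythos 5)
Your proposal is correct and follows essentially the same route as the paper's (sketched) proof: select in every primary clique the vertex representing $\phi(i)\phi(j)$ and in every cycle and equality clique the corresponding copy, then check non-adjacency across the three kinds of consecutive clique pairs. You also correctly read the adjacency rule between $F^s_{iab}$ and $C^q_{ib}$ in the way the parenthetical ``contain different vertices from $V_i$'' intends (non-adjacent exactly when the represented edges share their endpoint in $V_i$), which is the reading needed for the equality gadget to work.
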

\begin{proof}
Consider a solution $\phi$ of $\Gamma$ of value 1. 
Therefore, for each $ij \in E(J)$, the pair $\phi(i)\phi(j)$ is an edge of $G$. 
Note that this edge is represented by some $v$ in each primary clique $C^q_{ij}$. 
We select those vertices to the set $I$. 
Recall that each remaining clique $B$ (i.e., a cycle clique or an equality clique), is a copy of some primary clique $C$. For each such clique $B$ we include to $I$ the vertex, which is a copy of the selected vertex in $C$.

By an argument analogous to the one in the proof of Claim~\ref{thm:Reduction1} we observe that the selected vertices belonging to one edge-cycle are pairwise non-adjacent.
Furthermore, note that the edges between adjacent cliques in an equality gadget are defined in a way, so that all selected vertices from cliques in this gadget are pairwise non-adjacent.
Thus, the $I$ is an independent set of size $k$.
\end{proof}

\begin{claimm}
\label{thm:ReductionK14-2}
 Let $\gamma > 0$.
 If the graph $G'$ has an independent set of size at least $(1 - \gamma')\cdot 
k$ for $\gamma' = \frac{\gamma}{45s}$, then $\text{val}(\Gamma) \geq 1 
- \gamma$.
\end{claimm}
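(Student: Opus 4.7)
The plan is to mirror the proof of Claim~\ref{thm:Reduction2}. Starting from an independent set $I$ of $G'$ with $|I| \geq (1-\gamma')k$, observe that since each base clique contributes at most one vertex to $I$, the number of \emph{empty} base cliques (those disjoint from $I$) is at most $\gamma' k$.

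First I would introduce the notion of completeness: an edge $ij \in E(H)$ is \emph{complete} if every base clique of its edge-cycle meets $I$, and an equality gadget is \emph{complete} if every equality clique in it meets $I$. Next, by an argument analogous to the one in Claim~\ref{thm:Reduction2}---using that $\prec_{ij}$ is a total order and tracing all the way around the edge-cycle---I would show that the selected vertices in a complete edge-cycle all represent a single common edge $e_{ij} \in E_{ij}$. An analogous propagation argument along a complete equality gadget (joining $C^p_{ia}$ to $C^q_{ib}$), combined with the terminal adjacency rule between $F^s_{iab}$ and $C^q_{ib}$, will force $e_{ia}$ and $e_{ib}$ to share their endpoint in $V_i$ whenever both incident edge-cycles are also complete.

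Then I call a vertex $i \in V(H)$ \emph{good} if every edge-cycle incident to $i$ and every equality gadget at $i$ is complete, and \emph{bad} otherwise. For each good $i$, the chain of equality-gadget arguments yields a common $V_i$-endpoint $w$ of all edges $e_{ij}$ (ranging over $ij \in E(H)$), so I would set $\phi(i) := w$, and for bad $i$ set $\phi(i)$ arbitrarily. Any edge $ij$ with both endpoints good is then satisfied by $\phi$, since $\phi(i)\phi(j) = e_{ij} \in E(G)$.

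Finally, I would count unsatisfied edges. Let $x$ and $y$ denote the numbers of incomplete edge-cycles and of incomplete equality gadgets, respectively. Each such structure contains at least one empty base clique and these are pairwise disjoint, so $x + y \leq \gamma' k$. A bad vertex must be an endpoint of an incomplete edge-cycle (at most $2x$ such endpoints) or the apex of an incomplete equality gadget (at most $y$), giving at most $2(x+y) \leq 2\gamma' k$ bad vertices. Since $\Delta(H) \leq 3$, at most $6\gamma' k$ edges of $H$ are incident to a bad vertex and thus possibly unsatisfied. Together with $k \leq 5s\,|E(H)|$ from~\eqref{eq:k-upperbound}, this yields
\[
\val(\Gamma) \;\geq\; \frac{|E(H)| - 6\gamma' k}{|E(H)|} \;\geq\; 1 - 30s\gamma' \;\geq\; 1 - \gamma,
\]
because $\gamma' = \gamma/(45 s)$. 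The main obstacle is the equality-gadget propagation step: one must verify carefully that the antimatching edges between consecutive equality cliques, in combination with the terminal ``shared $V_i$-endpoint'' adjacency rule, really do chain together to force $e_{ia}$ and $e_{ib}$ to agree on their $V_i$-endpoint whenever the gadget is complete.
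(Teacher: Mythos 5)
Your proposal is correct and follows essentially the same route as the paper's proof: define $\phi(i)$ by propagating the choice of a represented edge around complete edge-cycles (as in Claim~\ref{thm:Reduction2}) and through complete equality gadgets to extract a common $V_i$-endpoint, then bound the number of ``bad'' vertices by charging them to empty base cliques and conclude via the bound $k \leq 5s\,|E(H)|$. The only difference is bookkeeping granularity -- you count incomplete structures separately and get $2\gamma'k$ bad vertices and a final loss of $30s\gamma'$, whereas the paper charges each empty clique to the at most three sets $\mathcal{S}^i$ containing it and loses $45s\gamma'$; both fit the chosen $\gamma' = \gamma/(45s)$.
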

\begin{proof}
Consider an independent set $I$ in $G$ of size at least $(1-\gamma')\cdot k$, and a vertex $i \in V(J)$. Suppose that $\deg (i) = 3$ and the neighbors of $i$ in $J$ are $j_1,j_2,j_3$ (if the degree of $i$ is smaller, the reasoning is analogous).

Let $\mathcal{S}^i$ be the union of all base cliques corresponding to $i$, i.e., 
\begin{enumerate}
 \item belonging to edge-cycles corresponding to $ij_1,ij_2,ij_3$, and
 \item belonging to equality gadgets between these edge-cycles.
\end{enumerate}
Note that the number of cliques in $\mathcal{S}^i$ is $3 \cdot 4(s+1) + 3 \cdot s = 15s+12$, and let $I_i$ be the intersection of $I$ with the vertices of $\mathcal{S}^i$.
Suppose that the size of $I_i$ is $15s+12$, i.e., we selected a vertex from each base clique in $\mathcal{S}^i$ -- we call such $I_i$ complete. 
By the reasoning analogous to Claim~\ref{thm:Reduction2}, we observe that for each of three edge-cycles in $\mathcal{S}^i$, the selected vertices correspond to the same edge of $G$, denote these edges by $e_1,e_2,e_3$, respectively. 
Furthermore, as in the proof of Claim~\ref{thm:ReductionK14-1}, we observe that the edges $e_1,e_2,e_3$ share a vertex $v \in V_i$. 
If $I_i$ is complete, we set $\phi(i) = v$. Otherwise, we set $\phi(i)$ arbitrarily.

Let $Q$ be the set of those $i$, for which $I_i$ is not complete.
We observe that each base clique $B$ is in at most three sets $\mathcal{S}^i$. Consider a base clique $B$.
If $B$ is a primary clique or a cycle clique, then it corresponds to some $E_{ij}$, and $B$ belongs $\mathcal{S}^i$ and $\mathcal{S}^j$. In the last case, if $B$ is an equality clique in the equality gadget joining edge-cycles corresponding to, say, $ij_1$ and $ij_2$, then $C$ belongs to $\mathcal{S}^i,\mathcal{S}^{j_1},\mathcal{S}^{j_2}$.
Summing up, each base clique belongs to at most three sets $\mathcal{S}^i$.
Since there are fewer than $\gamma' \cdot k$ base cliques $B$, such that $B \cap I = \emptyset$, we observe that the size of $Q$ is at most $3\gamma' \cdot k$. The vertices in $Q$ are incident to at most $9\gamma' \cdot k$ edges in $J$, and all remaining edges are satisfied by $\phi$. So, using \eqref{eq:k-upperbound}, we obtain
\[
\val(\Gamma) \geq \frac{|E(J)|-9\gamma' \cdot k}{|E(J)|} \geq 1-45s \cdot \gamma'=1-\gamma.
\qedhere
\]
\end{proof}

\subsection{Refuting Constant-Factor FPT Approximation} \label{sec:noconstant}
In this section we prove \cref{thm:NoConstantFptApprox}.
However, as mentioned in \cref{sec:Intro}, we need to consider a larger class than $\cC^*([4,z],5)$ to obtain the lower bound.
Let ${\cal P}(a,b)$ be a graph family consisting of cycles $C_p$ for all $p \in [a,b]$ and all trees without twins in ${\cal T}(\lceil\frac{b - 1}{2}\rceil)$ and let $\mathcal{D}(a,b)$ be the class of ${\cal P}(a,b)$-free graphs.
Note that $\cC^*([a,b], c) \subseteq {\cal D}(a,b)$ as ${\cal P}(a,b) \subseteq \cH([a,b],c) \cup {\cal T}(\lceil\frac{b - 1}{2}\rceil)$.
We will prove the following theorem that implies \cref{thm:NoConstantFptApprox}.

\begin{theorem}
\label{thm:NoConstantFptApproxFormal}
Let $z \geq 5$ be a constant. Let $\gamma > 0$ be a constant and let $f: \mathbb{N} \to \mathbb{N}$ be a computable function.
Unless the deterministic Gap-ETH fails, there is no algorithm, given an $n$-vertex instance $G \in \mathcal{D}(5,z)$  and an integer $k$, runs in time $f(k) \cdot n^{\Oh(1)}$ and can distinguish between the two cases:
$\alpha(G) \geq k$, and $\alpha(G) < (1 - \gamma) \cdot k$.
\end{theorem}

The idea of the proof is to use the \emph{lexicographic product} to amplify the approximation factor given by statement \eqref{it:NoFptApprox-classes} of \cref{thm:noFPAS-classes}.
 Let $G_1 = (V_1, E_1)$ and $G_2 = (V_2, E_2)$ be graphs.
 The \emph{lexicographic product} $G_1 \times_\ell G_2$ is the graph $G = (V,E)$ such that $V = V_1 \times V_2$ and $(u_1,v_1)(u_2,v_2) \in E$ if $u_1u_2 \in E_1$ or $u_1 = u_2$ and $v_1v_2 \in E_2$.
 In other words, the graph $G$ consist of copies $G^u_2$ of $G_2$, one for each $u \in V_1$, and a vertex $v_1$ from $G^{u_1}_2$ and a vertex $v_2$ from $G^{u_2}_2$ (for $u_1 \neq u_2$) are adjacent if and only if $u_1u_2 \in E_1$.
 We use the following two properties of the lexicographic product to obtain our result.
 
 \begin{proposition}[Geller and Stahl~\cite{Geller75}]
  \label{prp:ProductIS}
For graphs $G_1, G_2$, it holds that $\alpha(G_1 \times_\ell G_2) = \alpha(G_1)\cdot \alpha(G_2)$.
 \end{proposition}
 
 Unfortunately, the lexicographic product does not preserve ``$H$-freeness'' for all graphs $H \in \cH([a,b],c) \cup {\cal T}(b')$. Indeed, it might contain a copy of $H$ even if the original graphs were $H$-free.
 However, this might happen only if $H$ has some specific structure, as shown in the next proposition.
 Note that no graph in $\cH([a,b],c)$ and ${\cal T}(b')$ contains a triangle and they are all connected.
 
 \begin{proposition}
  \label{prp:ProductHFree}
  Let $H$ be connected, triangle-free graph without twins.
  Let $G_1$ and $G_2$ be $H$-free graphs.
  Then, $G_1 \times_\ell G_2$ is also $H$-free.
 \end{proposition}
 \begin{proof}
  Suppose for a contradiction that $G = G_1 \times_\ell G_2$ contains $H$ as an induced subgraph.
  As we mentioned above, $G$ consists of copies $G^u_2$ of $G_2$ for each $u \in V_1$.
  First, the copy of $H$ cannot be completely contained in one copy $G^u_2$ as $G_2$ is $H$-free.
  Supose that each copy $G^u_2$ contains at most one vertex of $H$.
  Then, the graph $G_1$ would contain $H$ as an induced subgraph.
  Thus, there is a copy $G^{u_1}_2$ that contains at least two vertices of $H$, say $w_1 = (u_1,v_2)$ and $w_2 = (u_1,v_2)$.
  
  The graph $H$ has no twins and the neighbors of $w_1$ and $w_2$ outside of $G^{u_1}_2$ are the same.
  Thus, there is another vertex $w_3 = (u_1,v_3)$ of $H$ in $G^{u_1}_2$ such that $w_3$ is adjacent to one of the vertices $w_1$ and $w_2$, without loss of generality say $w_1$.
  Since the graph $H$ is connected and is not entirely contained in $G^{u_1}_2$, there is a vertex $w_4 = (u_2,v')$ of $H$ in $G^{u_2}_2$ such that $w_4$ is adjacent to at least one vertex of $w_1,w_2,w_3$.
  However, since at least one edge is present between $G^{u_1}_2$ and $G^{u_2}_2$, there is an edge $u_1u_2 \in E(G_2)$ and therefore, there is a complete bipartite graph between $G^{u_1}_2$ and $G^{u_2}_2$.
  Thus, $w_4$ is connected to all $w_1, w_2$, and $w_3$.
  Since $H$ is an induced subgraph of $G$, the graph $H$ would contain a triangle $w_1,w_3,w_4$, which is a contradiction.
 \end{proof}

 When we restrict the family $\cC([4,b],c)$ to the graphs without twins we get exactly a family consisting of cycles of length at least 5 and at most $b$, as cycles of length at least 5 do not contain twins and on the other hand the stars and $C_4$ contain twins.
 Hence, by restricting the family $\cC^*([4,z],5)$ (as used in \cref{thm:noFPAS-classes}) to the graphs without twins we obtain exactly the family ${\cal P}(5,z)$.
 Note that graphs in $\cC([5,z],4)$ without twins are in ${\cal P}(5,z)$ as well.
 
 \begin{proof}[Proof of \cref{thm:NoConstantFptApproxFormal}]
  Suppose for a contradiction there is a constant $\gamma_0 > 0$ and an algorithm $\mathcal{A}$ with runtime $f(k)\cdot n^c$ for a computable function $f$ and a constant $c$ that for an input graph $G \in \mathcal{D}(5,z)$ can distinguish between two cases whether $\alpha(G) \geq k$ or $\alpha(G) < (1 - \gamma_0)\cdot k$.
  Let $\gamma$ be a constant given by statement \eqref{it:NoFptApprox-classes} of \cref{thm:noFPAS-classes}.
  Recall that $\cC^*([4,z], 5) \subseteq \mathcal{D}(5,z)$.
  Thus in particular, there is no algorithm with runtime $f(k)\cdot n^{\Oh(1)}$ that can distinguish between the cases whether $\alpha(G) \geq k$ or $\alpha(G) < (1 - \gamma)\cdot k$ (under the deterministic Gap-ETH).

  Let $d$ be the smallest integer such that $(1 - \gamma)^d \leq (1 - \gamma_0)$.
  Now, let $G^{d}$ be a $d$-fold lexicographic product of $G$ with itself, i.e.,
  \[
   G^d = \underbrace{G \times_\ell \dots \times_\ell G}_d.
  \]
  Recall that each graph in ${\cal P}(5,z)$ is connected, triangle-free and without twins, thus by \cref{prp:ProductHFree}, $G^d \in \mathcal{D}(5,z)$ as well.
  Further by \cref{prp:ProductIS}, $\alpha(G^d) = \alpha(G)^d$.
  Now consider the two cases listed in the statement.
  If $\alpha(G) \geq k$, then $\alpha(G^d) \geq k^d$.
  On the other hand, if $\alpha(G) < (1 - \gamma)\cdot k$, then $\alpha(G^d) < (1 - \gamma)^d\cdot k^d \leq (1-\gamma_0)\cdot k^d$ by the definition of $d$.
  Thus, the algorithm $\mathcal{A}$ would distinguish the cases whether $\alpha(G^d) \geq k^d$ or $\alpha(G^d) < (1 - \gamma_0)\cdot k^d$ in time $f(k^d)\cdot n^c$.
  Subsequently, we can distinguish between the cases whether $\alpha(G) \geq k$ or $\alpha(G) \leq (1 - \gamma)\cdot k$ in time $f(k^d) \cdot (n^d)^c = f'(k) \cdot n^{\Oh(1)}$ for a computable function $f'$, which is a contradiction with statement \eqref{it:NoFptApprox-classes} of \cref{thm:noFPAS-classes}.
 \end{proof}

\section{Parameterized Approximation with $H$ as a Parameter} \label{sec:parameterH}
In this section we still consider the \IS problem in $H$-free graphs, but now 
our parameter is related to the graph $H$. 
First, we show \cref{thm:parameterH}. We point out that a similar argument was also observed by Bonnet~\cite{Bonnet-private}.

\parameterH*

\begin{proof}
We will reduce from \MIS, for which the vertices of the input graph are 
partitioned into $k$ disjoint sets $V_1,V_2\ldots,V_k$, each of which forms a 
clique. Note that any independent set can contain at most one vertex from each 
set $V_i$ where $i\in\{1,\ldots, k\}$. Let $\mathcal{H}_d$ be a class of graphs 
as in the statement. Set $k=d$ and let $G$ be an instance of \MIS. Let us 
observe that the vertex set of $G$ is partitioned into $k=d$ cliques, so~$G$ is 
clearly $H$-free for every~$H \in \mathcal{H}_d$.

By simply taking the complement of the input graph, we can easily establish 
that \MIS is as hard as \MCSI{} where $J$ is a clique, i.e., the \MC problem. 
Thus \MIS is \Wone-hard and has no $n^{o(k)}$ algorithm, unless the ETH 
fails~\cite[Theorem 13.25 and Corollary 14.23]{DBLP:books/sp/CyganFKLMPPS15}. 
Furthermore, by a result of Lin et al.~\cite{lin_et_al:LIPIcs.ICALP.2022.90} the 
\MC problem has no $k^{o(1)}$-approximation in $f(k)n^{O(1)}$ time under ETH, 
and by Chalermsook et al.~\cite{DBLP:conf/focs/ChalermsookCKLM17} no clique of 
size $\zeta(k)$ can be computed in $f(k)n^{\zeta(k)}$ time under the 
deterministic Gap-ETH. From these results the statement follows.
\end{proof}

Now let us consider the \IS problem in $K_{1,d}$-free graphs, parameterized by \emph{both} $k$ and $d$.
In this 
case we are able to give parameterized approximation lower bounds based on the 
following sparsification of \MCSI{}. 

\begin{theorem}\label{thm:sparse-MCSI}
Consider an instance $\Gamma = (G,V_1,\dots,V_\ell,J)$ of \MCSI{$t$} with 
size~$n$. Let $\xi(\ell)=2^{(\log\ell)^{1/2+\eps}}$ for any constant 
$0<\eps<1/2$, and let $\zeta$ be any function in $\omega(1)$.
Given that $t>\xi(\ell)$ or~$t>\zeta(\ell)$, respectively, for any computable 
function $f$, there is no $f(\ell)\cdot n^{\Oh(1)}$ time algorithm that can 
distinguish between the two cases:
\begin{compactenum}
 \item (YES-case) $\val(\Gamma) = 1$, and 
 \item (NO-case) 
 \begin{compactitem}
 \item $\val(\Gamma) \leq \xi(\ell)/t$ assuming the deterministic Gap-ETH, and
 \item $\val(\Gamma) \leq \zeta(\ell)/t$ assuming the Strongish Planted Clique 
Hypothesis.
 \end{compactitem}
\end{compactenum}

\end{theorem}

To prove \cref{thm:sparse-MCSI} we need two facts. The first is the Erd{\H 
o}s-Gallai theorem on \emph{degree sequences}, which are sequences of 
non-negative integers $d_1,\ldots,d_n$, for each of which there exists a simple 
graph on $n$ vertices such that vertex $i\in[n]$ has degree $d_i$. We use the 
following constructive formulation due to Choudum~\cite{choudum86}.

\begin{theorem}[Erd\H{o}s-Gallai theorem~\cite{choudum86}]
\label{thm:erdosgallai}
A sequence of non-negative integers $d_1 \geq \dots \geq d_n$ is a degree 
sequence of a simple graph on $n$ vertices if $d_1 + \dots + d_n$ is even and 
for every $1 \leq k \leq n$ the following inequality holds:
\newline
$  \sum_{i = 1}^k d_i \leq k(k-1) + \sum_{i = k+1}^n \min(d_i, k).$
Moreover, given such a degree sequence, a corresponding graph can be 
constructed in polynomial time.
\end{theorem}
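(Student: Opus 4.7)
The plan is to prove the theorem by induction on the total degree $D := \sum_{i=1}^{n} d_i$, peeling off two units of degree at every step by adding a single edge to a graph constructed inductively for a reduced sequence. The base case $D = 0$ is realized by the empty graph on $n$ vertices, for which both conditions hold trivially.

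For the inductive step, assume $d_1 \geq d_2 \geq \ldots \geq d_n$ has even sum $D > 0$ and satisfies the Erd\H{o}s--Gallai inequalities. Let $t$ be the largest index with $d_t > 0$. First I would choose two indices $p < q \leq t$ in a canonical way (for instance $p = 1$ together with the smallest $q$ that makes the pair suitable), and then establish the key lemma: the sequence $d'$ obtained by decrementing $d_p$ and $d_q$ by $1$ and re-sorting has even sum $D - 2$ and again satisfies the Erd\H{o}s--Gallai inequalities. This is verified by a case analysis on the index $k$ in the inequality $\sum_{i=1}^{k} d'_i \leq k(k-1) + \sum_{i=k+1}^{n} \min(d'_i, k)$: the left-hand side drops by $0$, $1$, or $2$ depending on whether $k < p$, $p \leq k < q$, or $k \geq q$ (after re-sorting), while one checks that the right-hand side either stays the same or drops by at most the same amount, taking care that $\min(d'_i, k)$ may change under the decrement. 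The freedom in selecting $p$ and $q$, together with the parity condition, handles the corner cases where the original inequality is tight.

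By the induction hypothesis, a simple graph $G'$ realizing $d'$ exists and can be produced in polynomial time. Let $u, v$ be vertices of $G'$ whose degrees equal $d_p - 1$ and $d_q - 1$. If $uv \notin E(G')$, I add this edge and obtain a graph with degree sequence $d$. Otherwise $uv \in E(G')$, in which case I would invoke a standard $2$-switch: either some alternative vertex $u'$ with $\deg_{G'}(u') = d_p - 1$ is non-adjacent to $v$, or one can find a pair $x \neq y$ with $ux, vy \in E(G')$ and $xy \notin E(G')$ and swap $\{ux, vy\}$ for $\{uv, xy\}$ (i.e., remove the former edges and insert the latter); this preserves all degrees and turns $uv$ into a non-edge, which can then be added as before. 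A counting argument, using that $u$ and $v$ together have degree less than $2(n-1)$ while $G'$ has $\binom{n}{2}$ possible edges, guarantees such a pair $(x, y)$.

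The main obstacle is the key lemma on the decrement $(p, q)$: the Erd\H{o}s--Gallai inequalities are delicate because the $\min(d_i, k)$ terms change discretely under decrement and re-sorting, so proving that the reduced sequence still satisfies all inequalities requires splitting into sub-cases according to how $d_p$ and $d_q$ compare to $k$, and choosing $(p, q)$ so that no inequality is violated. Once this combinatorial lemma is in place, polynomial-time constructibility follows immediately: each inductive step drops $D$ by exactly $2$, giving at most $D/2 \leq \binom{n}{2}$ iterations, each performing a polynomial amount of sorting, Erd\H{o}s--Gallai checking, and at most one $2$-switch.
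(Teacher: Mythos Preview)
The paper does not prove this theorem; it is quoted from Choudum and used as a black box in the proof of \cref{thm:sparse-MCSI}. Your outline is essentially Choudum's own inductive argument, so in spirit it matches exactly what the paper invokes.

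Two concrete problems with the write-up, though. First, your $2$-switch is stated backwards. If $uv \in E(G')$ and you want to turn it into a non-edge while preserving every degree, you need to find $x,y$ with $xy \in E(G')$ and $ux,vy \notin E(G')$, and then replace $\{uv,xy\}$ by $\{ux,vy\}$. Your version removes $ux,vy$ and inserts $uv,xy$, but $uv$ is already present, so this creates a multiedge rather than a non-edge; and the counting you offer (``$u$ and $v$ together have degree less than $2(n-1)$'') does not by itself guarantee that the required edge $xy$ exists.

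Second, Choudum's proof commits to the specific decrement pair $(p,q)=(1,t)$, and the verification that the reduced sequence again satisfies every Erd\H{o}s--Gallai inequality genuinely depends on that choice. Leaving $(p,q)$ open (``the smallest $q$ that makes the pair suitable'') and appealing to unspecified freedom to handle tight inequalities is precisely where the actual content of the proof is being skipped; for a complete argument you have to fix the pair and carry the case analysis through.
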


We also need parameterized approximation lower bounds for \MCSI{}, as 
given by Dinur and Manurangsi~\cite{DBLP:conf/innovations/DinurM18} and 
Manurangsi et al.~\cite{manurangsi_et_al:LIPIcs.ITCS.2021.10}.

\begin{theorem}[Dinur and Manurangsi~\cite{DBLP:conf/innovations/DinurM18}, 
Manurangsi et al.~\cite{manurangsi_et_al:LIPIcs.ITCS.2021.10}]
\label{thm:hard-detgapETH}
Consider an instance $\Gamma = (G,V_1,\dots,V_\ell,J)$ of \MCSI{} with 
size~$n$ and $J$ a complete graph. Let $\xi(\ell)=2^{(\log\ell)^{1/2+\eps}}$ for 
any constant $0<\eps<1/2$, and let $\zeta$ be any function in $\omega(1)$. There 
is no $f(\ell)\cdot n^{\Oh(1)}$ time algorithm for any computable function~$f$ 
that can distinguish between the following two cases: 
\begin{compactenum}
 \item (YES-case) $\val(\Gamma) = 1$, and
 \item (NO-case) 
 \begin{compactitem}
 \item $\val(\Gamma) \leq \xi(\ell)/{\ell}$ under the deterministic Gap-ETH, and
 \item $\val(\Gamma) \leq \zeta(\ell)/{\ell}$ under the Strongish Planted 
Clique Hypothesis.
 \end{compactitem}
\end{compactenum}
\end{theorem}

\begin{proof}[Proof of \cref{thm:sparse-MCSI}]
Let $\Gamma = (G,V_1,\dots,V_\ell,J)$ be an instance of \MCSI{} where~$J$ is a 
complete graph. To find an instance of \MCSI{$t$} given $\Gamma$, we first need 
to construct a graph $J'$ with maximum degree $t$, for which we use the Erd{\H 
o}s-Gallai theorem. For this, let $\ell'=\ell$ if $\ell$ is even and 
$\ell'=\ell-1$ if $\ell$ is odd. Now, by \cref{thm:erdosgallai} it is easy to 
verify that a $t$-regular graph on $\ell'$ vertices exists as $t\ell'$ is even. 
Moreover, the proof of \cref{thm:erdosgallai} by Choudum~\cite{choudum86} is 
constructive, so that we can compute~$J'$ on $\ell$ vertices in polynomial time 
by setting it to the constructed $t$-regular graph if $\ell'=\ell$, or by adding 
one more isolated vertex if~$\ell'=\ell-1$. Note that 
$V(J')=V(J)=\{1,\ldots,\ell\}$, $E(J')\subseteq E(J)$ as $J$ is a complete 
graph, and $|E(J')|=t\ell'/2$.

We create a graph $G'$ by removing edges from $G$ according to~$J'$. That is, 
for any $1\leq i,j\leq\ell$, if $ij \not \in E(J')$ then we remove all edges 
between sets $V_i$ and~$V_j$. The resulting subgraph of $G$ is called $G'$, and 
we get an instance $\Gamma' = (G',V_1,\dots,V_\ell,J')$ of~\MCSI{$t$}.

It is easy to see that if $\text{val}(\Gamma) = 1$, then $\text{val}(\Gamma') = 
1$ as well: we just use the optimal solution for $\Gamma$ and remove any edges 
non-existent in $G'$. Now suppose that $\text{val}(\Gamma) \leq \nu$, which 
means that every solution $\phi$ satisfies at most a $\nu$-fraction of edges of 
$J$. Let $\phi$ be an arbitrary solution of $\Gamma'$, which is also a solution 
for~$\Gamma$ as $G'\subseteq G$ and $J'\subseteq J$. By our assumption we know 
that it satisfies at most $\nu\cdot |E(J)|$ edges of~$J$. Thus, the solution 
$\phi$ satisfies at most $\nu\cdot |E(J)|$ edges of $J'$ as well, and we 
obtain
\[
\val(\Gamma') \leq \frac{\nu\cdot |E(J)|}{|E(J')|} = 
\nu\cdot\frac{\ell(\ell - 1)/2}{t\ell'/2} \leq 
\nu\cdot \frac{\ell(\ell - 1)}{t(\ell-1)}=
\nu\cdot\frac{\ell}{t}.
\]

By the first part of \cref{thm:hard-detgapETH}, no $f(\ell) \cdot 
n^{\Oh(1)}$ time algorithm can distinguish between $\val(\Gamma)=1$ and 
$\val(\Gamma)\leq {\xi(\ell)}/{\ell}$ given $\Gamma$, where $\xi(\ell)=2^{(\log 
k)^{1/2+\eps}}$ for any constant $0<\eps<1/2$, under the deterministic 
Gap-ETH. By the above calculations, for $\Gamma'$ we obtain that no such 
algorithm can distinguish between $\val(\Gamma')=1$ and $\val(\Gamma')\leq 
\xi(\ell)/t$ by setting $\nu={\xi(\ell)}/{\ell}$, and so we obtain the first 
part of \cref{thm:sparse-MCSI}.

When using the second part of \cref{thm:hard-detgapETH} instead, under the 
Strongish Planted Clique Hypothesis, given~$\Gamma$ and any function 
$\zeta\in\omega(1)$, no $f(\ell) \cdot n^{\Oh(1)}$ time algorithm can distinguish 
between $\val(\Gamma)=1$ and $\val(\Gamma)\leq {\zeta(\ell)}/{\ell}$. Analogous 
to before, we obtain the second part of \cref{thm:sparse-MCSI} by setting 
$\nu=\zeta(\ell)/\ell$.
\end{proof}

Based on \cref{thm:sparse-MCSI} we can prove \cref{thm:starsparsified} using 
the reduction of \cref{lemma:K1dfreereduction}. 

\starsparsified*

\begin{proof}
We reduce via \cref{lemma:K1dfreereduction} from \MCSI{$t$} to \IS, which given 
an instance $\Gamma$ of \MCSI{$t$} results in a $K_{1,2t+2}$-free graph $G$ for 
\IS. We thus set $d=2t+2$. If $\val(\Gamma)=1$, then $G$ has an 
independent set of size $k={\ell\choose 2}$. If $\val(\Gamma)\leq\xi(\ell)/t$ 
or $\val(\Gamma)\leq\zeta(\ell)/t$, then every independent set of $G$ has size 
at most $\xi(\ell){\ell\choose 2}/t\leq \frac{\xi(k)k}{d/2-1}$ or 
$\zeta(\ell){\ell\choose 2}/t\leq \frac{\zeta(k)k}{d/2-1}$, respectively, 
assuming w.l.o.g.\ that $k\geq 4$ so that $\ell\leq 2\sqrt{k}\leq k$. In the 
first case, given a constant $\eps'>0$ we may choose $\eps$ small enough in 
\cref{thm:sparse-MCSI} so that $\frac{\xi(k)k}{d/2-1}\leq 2^{(\log 
k)^{1/2+\eps'}} k/d$. Thus, for $\xi'(k)=2^{(\log k)^{1/2+\eps'}}$, a 
$d/\xi'(k)$-approximation algorithm for \IS would be able to distinguish between 
the YES- and NO-case of~$\Gamma$. In the second case, given any function 
$\zeta'\in\omega(1)$, we may choose an appropriate function $\zeta\in\omega(1)$ 
in \cref{thm:sparse-MCSI} for which $\frac{\zeta(k)k}{d/2-1}\leq \zeta'(k)k/d$. 
Thus a $d/\zeta'(k)$-approximation algorithm for \IS would be able to 
distinguish between the YES- and NO-case of~$\Gamma$.

Note that $d=2t+2\leq 2\ell$ as the maximum degree of the graph $J$ is $\ell-1$. 
Thus if the runtime of this algorithm is $f(d,k) \cdot n^{\Oh(1)}$, then for 
some function~$f'$ this would be a $f'(\ell) \cdot n^{\Oh(1)}$ time algorithm 
for \MCSI{$t$}. However, according to \cref{thm:sparse-MCSI} this would be a 
contradiction, unless the deterministic Gap-ETH or the Strongish Planted Clique 
Hypothesis fails, respectively. We may rename $\xi'(k)$ to $\xi(k)$ or 
$\zeta'(k)$ to $\zeta(k)$ to obtain \cref{thm:starsparsified}.
\end{proof}

\section{Conclusion and Open Problems}
Our parameterized inapproximability results of \cref{thm:noFPAS} suggest that 
the \IS problem is hard to approximate to within some constant, whenever it is 
\Wone-hard to solve on $H$-free graphs, according to
\cref{thm:bonnet-w1hard}. In most cases it is unclear though whether any 
approximation can be computed (either in polynomial time or by exploiting the 
parameter $k$), which beats the strong lower bounds for polynomial-time 
algorithms for general graphs. The only known exceptions to this are the 
$K_{1,d}$-free case, where a polynomial-time 
$(\frac{d-1}{2}+\delta)$-approximation algorithm was shown by 
Halld\'orsson~\cite{DBLP:conf/soda/Halldorsson95}, and the $K_{a,b}$-free case, 
for which we showed a polynomial-time 
$\Oh\bigl((a+b)^{1/a} \cdot \alpha(G)^{1-1/a}\bigr)$-approximation algorithm in \cref{thm:Kab}. 
For $K_{1,d}$-free graphs, we were also able to show an almost asymptotically 
tight lower bound for polynomial-time algorithms in \cref{thm:k1dfree}. For 
parameterized algorithms, our lower bound of \cref{thm:starsparsified} for 
$K_{1,d}$-free graphs does not give a tight bound, but seems to suggest that 
parameterizing by $k$ does not help to obtain an improvement.

Settling the question whether $H$-free graphs admit better approximations to \IS 
than general graphs, remains a challenging open problem, both for 
polynomial-time algorithms and algorithms exploiting the parameter $k$.

Let us point out one more, concrete open question. Recall from  \cref{thm:bonnet-w1hard}
Bonnet {\em et al.}~\cite{DBLP:conf/iwpec/BonnetBCTW18} were able to show \Wone-hardness
for graphs which \emph{simultanously} exclude $K_{1,4}$ and all induced cycles of length in $[4,z]$, for any constant $z \geq 5$.
On the other hand, we presented two separate reductions, one for $(K_{1,5},C_4,\ldots,C_z)$-free graphs,
and another one for $(K_{1,4},C_5,\ldots,C_z)$-free graphs.
It would be nice to provide a uniform reduction, i.e., prove hardness for 
parameterized approximation in  $(K_{1,4},C_4,\ldots,C_z)$-free graphs.

Finally, note that the statement \eqref{it:xp-gap} of \cref{thm:MCSIHardness} only excludes algorithms with running time $f(k) \cdot n^{o(\sqrt{k})}$. However, a straightforward algorithm has running time $f(k) \cdot n^{\Oh(k)}$. Is is possible to obtain a matching lower bound (at least up to polylogarithmic factors in the exponent)? 

\section{Acknowledgement}
We would like to thank to the anonymous reviewer, who suggested using gap amplification to obtain \cref{thm:NoConstantFptApprox}.
We are also grateful to the other reviewer for pointing out the mistake in \cref{thm:noFPAS} in the conference version of our paper~\cite{DBLP:conf/wg/DvorakFRR20}.

\bibliographystyle{abbrv}
\bibliography{main}

\end{document}